\newcommand{\project}[2]{\ensuremath{\pi_{#1}(#2)}}
\newcommand{\select}[2]{\ensuremath{\sigma_{#1}(#2)}}
\newcommand{\selectH}[2]{\ensuremath{\sigma^\approx_{#1}(#2)}}
\newcommand{\join}[3]{\ensuremath{#2\bowtie_{#1}#3}}
\def\hh{\ast}
\def\atr#1{\ensuremath{\text{\textit{\texttt{\MakeLowercase{#1}}}}}}
\def\val#1{\ensuremath{\text{\texttt{#1}}}}
\begin{document}

%\abovedisplayshortskip=4pt
%\abovedisplayskip=4pt
%\belowdisplayshortskip=4pt
%\belowdisplayskip=4pt

\mainmatter

\title{Sensitivity Analysis for Declarative Relational Query Languages
  with Ordinal Ranks\thanks{%
    Supported by grant no. P103/11/1456 of the Czech Science Foundation.}%
  \thanks{%
    The paper will appear in 
    Proceedings of the 19th International Conference on
    Applications of Declarative Programming and Knowledge Management
    (INAP 2011).}}

\author{Radim Belohlavek \and Lucie Urbanova \and Vilem Vychodil}

\institute{DAMOL (Data Analysis and Modeling Laboratory) \\
  Dept. Computer Science, Palacky University, Olomouc \\
  17. listopadu 12, CZ--77146 Olomouc, Czech Republic \\
  \mailrb, \maillu, \mailvv}

\maketitle

%%%%%%%%%%%%%%%%%%%%%%%%%%%%%%%%%%%%%%%%%%%%%%%%%%%%%%%%%%%%%%%%%%%%%%%%%%%%%%%%

\begin{abstract}
  We present sensitivity analysis for results of query executions in a relational
  model of data extended by ordinal ranks. The underlying model of data results
  from the ordinary Codd's model of data in which we consider ordinal ranks
  of tuples in data tables expressing degrees to which tuples match queries.
  In this setting, we show that ranks assigned to tuples are insensitive to
  small changes, i.e., small changes in the input data do not yield large changes
  in the results of queries.

% We contribute to the area of database and knowledge
%  management systems 
%  query languages 

  \keywords{%
    declarative query languages,
    ordinal ranks,
    relational databases,
    residuated lattices}
\end{abstract}

\abovedisplayshortskip=6pt
\abovedisplayskip=6pt
\belowdisplayshortskip=6pt
\belowdisplayskip=6pt

\sloppy

%%%%%%%%%%%%%%%%%%%%%%%%%%%%%%%%%%%%%%%%%%%%%%%%%%%%%%%%%%%%%%%%%%%%%%%%%%%%%%%%
%%%%%   INTRODUCTION AND MOTIVATION
%%%%%%%%%%%%%%%%%%%%%%%%%%%%%%%%%%%%%%%%%%%%%%%%%%%%%%%%%%%%%%%%%%%%%%%%%%%%%%%%
\section{Introduction}\label{sec:intro}
Since its inception, the relational model of data introduced by E. Codd \cite{Co:Armodflsdb}
has been extensively studied by both computer scientists and database systems developers.
The model has become the standard theoretical model of relational data and
the formal foundation for relational database management systems. 
Various reasons for the success and strong position of Codd's model are analyzed 
in~\cite{Dat:DRM}, where the author emphasizes that the main
virtues of the model like logical and physical data independence, declarative
style of data retrieval (database querying), access flexibility and data
integrity are consequences of a close connection between the model and the
first-order predicate logic.

%RR: Nasledujici dva odstavce, kterymi clanek zacinal, jsem nahradil
%In his seminal paper \cite{Co:Armodflsdb}, E. Codd introduced the relational
%model of data that has become one of the most influential contributions to
%computer science~\cite{Dat:DRM}.
%Since its inception, the model has been extensively
%studied by both computer scientists and database systems developers and it
%has quickly become the standard theoretical model of relational data and
%the formal foundation for relational database management systems. Its strong
%position remains unchanged even now, more than fourty years later.
%
%There are various reasons for the success of the model. The retrospective
%book~\cite{Dat:DRM} by J. C. Date lists several of them, emphasizing that the main
%virtues of the model like logical and physical data independence, declarative
%style of data retrieval (database querying), access flexibility and data
%integrity are consequences of the close connection between the model and the
%first-order predicate logic: ``The relational approach really is rock solid,
%owing (once again) to its basis in mathematics and predicate logic.''
%\cite[p. 138]{Dat:DRM}.

This paper is a continuation of our previous work~\cite{BeVy:Lffsbd,BeVy:Qssbd}
where we have introduced an extension of Codd's model in which tuples are
assigned ordinal ranks. The motivation for the model is that in many situations,
it is natural to consider not only the exact matches of queries in which a tuple
of values either \emph{does} or \emph{does not} match a query $Q$ but also
approximate matches where tuples match queries to degrees. The degrees of
approximate matches can usually be described verbally using linguistic
modifiers like ``not at all (matches)'' ``almost (matches)'',
``more or less (matches)'', ``fully (matches)'', etc.
From the user's point of view, each data table in our extended relational model
consists of (i) an ordinary data table whose meaning is the same as in the
Codd's model and (ii) ranks assigned to all tuples in the original data table.
This way, we come up with a notion of a ranked data table (shortly, an RDT).
The ranks in RDTs are interpreted as ``goodness of match'' 
%R: zvazit termin ``goodness of match'' misto ``priorities''
and the interpretation
of RDTs is the same as in the Codd's model---they represent answers to queries
which are, in addition, equipped with priorities expressed by the ranks.
A user who looks at an answer to a query in our model is typically looking for
the best match possible represented by a tuple or tuples in the resulting
RDT with the highest ranks (i.e., highest priorities).

In order to have a suitable formalization of ranks and to perform operations
with ranked data tables, we have to choose a suitable structure for ranks. Since
ranks are meant to be compared by users, the set $L$ of all considered ranks
should be equipped with a partial order $\leq$,
i.e. $\langle L,\leq\rangle$ should be a poset.
Moreover, it is convenient to postulate that $\langle L,\leq\rangle$ is
a complete lattice~\cite{Bir:LT},
i.e., for each subset $A \subseteq L$, its least upper
bound (a supremum) and greatest lower bound (an infimum) exist. This way,
for any $A \subseteq L$, one can take the least rank in $L$ which represents
a higher priority (a better match) than all ranks from $A$. Such a rank is
then the supremum of $A$ (dually for the infimum). Since $\langle L,\leq\rangle$
is a complete lattice, it contains the least element denoted $0$ (no match at all)
and the greatest element denoted $1$ (full match).

The set $L$ of all ranks should also be equipped with additional operations
for aggregation of ranks. Indeed,
if tuple $t$ with rank $a$ is obtained as one of the results of subquery $Q_1$
and the same $t$ with another rank $b$ is obtained from answers to subquery
$Q_2$ then we might want to express the rank to which $t$ matches
a compound conjunctive query ``$Q_1$ \emph{and} $Q_2$''.
A natural way to do so is to take a suitable
binary operation $\otimes\!: L \times L \to L$ which acts as a conjunctor and
take $a \otimes b$ for the resulting rank. Obviously, not every binary operation
on $L$ represents a (reasonable) conjunctor, i.e. we may restrict the choices
only to particular binary operations that make ``good conjunctors''. There are
various ways to impose such restrictions. In our model,
we follow the approach of using residuated conjunctions that has proved to
be useful in logics based on residuated lattices~\cite{Bel:FRS,Got:Mfl,Haj:MFL}.
Namely, we assume that $\langle L,\otimes,1 \rangle$ is a commutative monoid
(i.e., $\otimes$ is associative, commutative, and neutral with respect to $1$)
and there is a binary operation $\rightarrow$ on $L$ such that
for all $a,b,c \in L$:
\begin{align}
  a \otimes b \leq c \quad \text{if and only if} \quad a \leq b \rightarrow c.
  \label{eqn:adj}
\end{align}
Operations $\otimes$ (a multiplication) and $\rightarrow$
(a residuum) satisfying \eqref{eqn:adj} are called \emph{adjoint operations.}
Altogether, the structure for ranks we use is
a \emph{complete residuated lattice}
$\mathbf{L}=\langle L,\wedge,\vee,\otimes,\rightarrow,0,1\rangle$, i.e.,
a complete lattice in which $\otimes$ and $\rightarrow$ are adjoint operations,
and $\wedge$ and $\vee$ denote the operations of infimum and supremum,
respectively. Considering $\mathbf{L}$ as a basic structure of ranks brings
several benefits. First, in multiple-valued logics and in particular fuzzy
logics~\cite{Got:Mfl,Haj:MFL}, residuated lattices are interpreted as structures
of truth degrees and the relationship \eqref{eqn:adj} between $\otimes$
(a fuzzy conjunction) and $\rightarrow$ (a fuzzy implication) is derived from
requirements on graded counterpart of the \emph{modus ponens} deduction rule
(currently, there are many strong-complete logics based on residuated lattices).

%V: pridan komentar o MP a vlastnostech operaci
\begin{remark}
  The graded counterpart of \emph{modus ponens}~\cite{Haj:MFL,Pav:Ofl} can be seen
  as a generalized deduction rule saying ``from $\varphi$ valid (at least)
  to degree $a \in L$ and $\varphi \Rightarrow \psi$ valid (at least)
  to degre $b \in L$, infer $\psi$ valid (at least) to degree $a \otimes b$''.
  If if-part of \eqref{eqn:adj} ensures that the rule is sound while the
  only-if part ensures that it is as powerful as possible, i.e., $a \otimes b$
  is the highest degree to which we infer $\psi$ valid provided that 
  $\varphi$ valid at least to degree $a$ and $\varphi \Rightarrow \psi$ valid
  at least to degre $b \in L$. This relationship between $\rightarrow$
  (a truth function for logical connective imlication $\Rightarrow$) and
  $\otimes$ has been discovered in~\cite{Gog:Lic} and later used, e.g.,
  in~\cite{Ger:FL,Pav:Ofl}. Interestingly, \eqref{eqn:adj} together with the lattice
  ordering ensure enough properties of $\rightarrow$ and $\otimes$. For instance,
  $\rightarrow$ is antitone in the first argument and is monotone in the second
  one, condition $a \leq b$ if{}f $a \rightarrow b = 1$ holds for all
  $a,b \in L$, $a \rightarrow (b \rightarrow c)$ equals
  $(a \otimes b) \rightarrow c$ for all $a,b,c \in L$, etc.
  Since complete residuated lattices are in general weaker structures than Boolean
  algebras, not all laws satisfied by truth functions of the classic
  conjunction and implication are preserved by all complete residuated lattices.
  For instance, neither $a \otimes a = a$ (idempotency of $\otimes$) nor
  $(a \rightarrow 0) \rightarrow 0 = a$ (the law of double negation) nor
  $a \vee (a \rightarrow 0) = 1$ (the law of the excluded middle) hold in general.
  Nevertheless, complete residuated lattices are strong enough to provide a formal
  framework for relational analysis and similarity-based reasoning as it has been
  shown by previous results.
\end{remark}

Second, our extension of the Codd's model results from the model by replacing
%R: dal bych two-element misto two-valued, stejne tak na displayed radku a dalsich vyskytech
the two-element Boolean algebra, which is the classic structure of truth values,
by a more general structure of truth values represented by a residuated lattice,
i.e. we make the following shift in (the semantics of) the underlying logic:
\begin{center}
  \emph{two-element Boolean algebra}
  \quad $\Longmapsto$ \quad
  \emph{a complete residuated lattice.}
\end{center}
Third, the original Codd's model is a special case of our model for $\mathbf{L}$
being the two-element Boolean algebra (only two borderline ranks $1$ and $0$ are
available). As a practical consequence, data tables in the Codd's model can be
seen as RDTs where all ranks are either equal to $1$ (full match) or $0$
(no match; tuples with $0$ rank are considered as not present in the result of
a query). Using residuated lattices as structures of truth degrees, we obtain
a generalization of Codd's model which is based on solid logical foundations and
has desirable properties. In addition, its relationship to residuated first-order
logics is the same as the relationship of the original Codd's model to the
classic first-order logic. The formalization we offer can further be used to
provide insight into several isolated approaches that have been
provided in the past, see e.g. \cite{BuPe:Fdne}, \cite{Fag:Cfi},
\cite{Liea:Rsql}, \cite{RaMa:Ffdljdfrds}, \cite{ShMe:Prfrdm},
\cite{Tak:Fdqltrct}, and a comparison paper~\cite{BeVy:Crmdfl}.

\begin{table}[t]
  \centering
  \caption{Houses for sale at \val{\$200,000} with square footage \val{1200}}
  \label{tab:sales_a}
  \small
  \setlength{\tabcolsep}{4pt}
  \renewcommand{\arraystretch}{0.9}
  %% example/tab/sales_a.tex
  \begin{tabular}{|r||lcrclc|}
    \hline &
    \atr{AGENT} & \atr{ID} & \atr{SQFT} &
    \atr{AGE} & \atr{LOCATION} & \atr{PRICE} \\
    \hline
    \rule{0pt}{8pt}%
    $0.93$ & \val{Brown} & \val{138} &
    \val{1185} & \val{48} & \val{Vestal} & \val{\$228,500} \\
    $0.89$ & \val{Clark} & \val{140} &
    \val{1120} & \val{30} & \val{Endicott} & \val{\$235,800} \\
    $0.86$ & \val{Brown} & \val{142} &
    \val{950} & \val{50} & \val{Binghamton} & \val{\$189,000} \\
    $0.85$ & \val{Brown} & \val{156} &
    \val{1300} & \val{85} & \val{Binghamton} & \val{\$248,600} \\
    $0.81$ & \val{Clark} & \val{158} &
    \val{1200} & \val{25} & \val{Vestal} & \val{\$293,500} \\
    $0.81$ & \val{Davis} & \val{189} &
    \val{1250} & \val{25} & \val{Binghamton} & \val{\$287,300} \\
    $0.75$ & \val{Davis} & \val{166} &
    \val{1040} & \val{50} & \val{Vestal} & \val{\$286,200} \\
    $0.37$ & \val{Davis} & \val{112} &
    \val{1890} & \val{30} & \val{Endicott} & \val{\$345,000} \\
    \hline
  \end{tabular}
\end{table}

A typical choice of $\mathbf{L}$ is a structure with $L=[0,1]$ (ranks are
taken from the real unit interval), $\wedge$ and $\vee$ being minimum and
maximum, $\otimes$ being a left-continuous (or a continuous) t-norm with
the corresponding $\rightarrow$, see~\cite{Bel:FRS,Got:Mfl,Haj:MFL}.
For example, an RDT with ranks coming from
such $\mathbf{L}$ is in Table~\ref{tab:sales_a}. It can be seen as a result of
similarity-based query ``show all houses which are sold for (approximately)
\val{\$200,000} and have (approximately) \val{1200} square feet''. The left-most
column contains ranks. The remaining part of the table is a data table in the
usual sense containing tuples of values. At this point, we do not explain in
detail how the particular ranks in Table~\ref{tab:sales_a} have been obtained
(this will be outlined in further sections). One way is by executing a
similarity-based query that uses additional information about similarity
(proximity) of domain values which is also described using degrees
from $\mathbf{L}$. Note that the concept of a similarity-based query appears
when human perception is involved in rating or comparing close values from
domains where not only the exact equalities (matches) are interesting.
For instance, a person 
%R: dal bych searching in nebo looking at
searching in a database of houses is usually not
interested in houses sold for a particular exact price. Instead, the person
wishes to look at houses sold approximately at that price, including those which
are sold for other prices that are sufficiently close. While the ranks
constitute a ``visible'' part of any RDT, the similarities are not a direct
part of RDT and have to be specified for each domain independently. They can be
seen as an additional (background) information about domains which is supplied
by users of the database system.

Let us stress the meaning of ranks as priorities. As it is usual in fuzzy logics
in narrow sense, their meaning is primarily \emph{comparative},
cf.~\cite[p. 2]{Haj:MFL} and the comments on comparative meaning of truth degrees
therein. In our example, it means that tuple
$\langle \val{Clark}, \val{140}, \val{1120}, \val{30}, \val{Endicott},
\val{\$235,800}\rangle$ with rank $0.89$ is a better match than tuple
$\langle \val{Brown}, \val{142}, \val{950}, \val{50}, \val{Binghamton},
\val{\$189,000}\rangle$ whose rank $0.86$ is strictly smaller.
Thus, for end-users, the numerical values of ranks (if $L$ is a unit interval)
are not so important, the important thing is the relative ordering of tuples
given by the ranks.

Note that our model which provides theoretical foundations for similarity-based
databases~\cite{BeVy:Lffsbd,BeVy:Qssbd} should not be confused with models
for probabilistic databases~\cite{SiOlReKo:PD} which have recently been studied,
e.g. in \cite{CaPi:Ttopd,DaSu:Eqeopd,DaSu:Mopdfac,ImLi:Iiird,Ko:Oqafpd,OlHuKo:Accipd},
see also~\cite{DaReSu:Pdditd} for a survey. In particular, numerical ranks used
in our model (if $L=[0,1]$) cannot be interpreted as probabilities, confidence
degrees of belief degrees as in case of probabilistic databases where ranks play
such roles. In probabilistic databases, the tuples (i.e., the data itself) are
uncertain and the ranks express probabilities that tuples appear in data tables.
Consequently, a probabilistic database is formalized by a discrete probability
space over the possible contents of the database~\cite{DaReSu:Pdditd}.
Nevertheless, the underlying logic of the models is the classical two-valued
first-order logic---only yes/no matches are allowed (with uncertain outcome).
In our case, the situation is quite different. The data (represented by tuples)
is absolutely certain but the tuples are allowed to match queries to degrees.
This, translated in terms of logic, means that formulas (encoding queries) are
allowed to be evaluated to truth degrees other than $0$ and $1$. Therefore, the
underlying logic in our model is not the classic two-element Boolean logic as
we have argued hereinbefore.

%RR: Prepsana prvni veta
In~\cite{Abea:Ldrsa}, a report written by leading authorities in database
systems, the authors say that the current database management systems 
have no facilities for
either approximate data or imprecise queries. According to this report, the
management of uncertainty and imprecision is one of the six currently most
important research directions in database systems. Nowadays, probabilistic
databases (dealing with approximate data) are extensively studied.
On the contrary, it seems that similarity-based databases (dealing with imprecise
queries) have not yet been paid full attention. This paper is a contribution to
theoretical foundations of similarity-based databases.

\section{Problem Setting}\label{sec:problem}
The issue we address in this paper is the following. In our model, we can get
two or more RDTs (as results of queries) which are not exactly the same but
which are perceived (by users) as being similar. For instance, one can obtain
two RDTs containing the same tuples with numerical values of ranks that are
almost the same. A question is whether such similar RDTs, when used in
subsequent queries, yield similar results. In this paper, we present a
preliminary study of the phenomenon of similarity of RDTs and its relationship
to the similarity of query results obtained by applying queries to similar input
data tables. We present basic notions and results providing formulas for
computing estimations of similarity degrees. The observations we present
provide a formal justification for the phenomenon discussed in the previous
section---slight changes in ranks do not have a large impact on the results
of (complex) queries. The results are obtained for any complete residuated
lattice taken as the structure of ranks (truth degrees). Note that the basic
query systems in our model are (extensions of) domain relational
calculus~\cite{BeVy:Qssbd,Mai:TRD} and relational
algebra~\cite{BeVy:Lffsbd,Mai:TRD}. We formulate the results in terms of
operations of the relational algebra but due to its equivalence with the domain
relational calculus~\cite{BeVy:Qssbd}, the results pertain to both the query
systems. Thus, based on the domain relational calculus, one may design
a declarative query language preserving similarity in which execution of queries
is based on transformations to expressions of relational algebra in a similar
way as in the classic case~\cite{Mai:TRD}.

The rest of the paper is organized as follows. Section~\ref{sec:prelim}
presents a short survey of notions. Section~\ref{sec:simil} contains results
on sensitivity analysis, an illustrative example, and a short outline of future
research. Because of the limited scope of the paper, proofs are sketched or
omitted.

%%%%%%%%%%%%%%%%%%%%%%%%%%%%%%%%%%%%%%%%%%%%%%%%%%%%%%%%%%%%%%%%%%%%%%%%%%%%%%%%
%%%%%   PRELIMINARIES
%%%%%%%%%%%%%%%%%%%%%%%%%%%%%%%%%%%%%%%%%%%%%%%%%%%%%%%%%%%%%%%%%%%%%%%%%%%%%%%%
\section{Preliminaries}\label{sec:prelim}
%V: kapitola trochu prepsana a mirne rozsirena
In this section, we recall basic notions of RDTs and relational operations we
need to provide insight into the sensitivity issues of RDTs in
Section~\ref{sec:simil}. Details can be
found in~\cite{Bel:FRS,BeVy:Lffsbd,BeVy:Crmdfl}.
In the rest of the paper, $\mathbf{L}$
always refers to a complete residuated lattice
$\mathbf{L}=\langle L,\wedge,\vee,\otimes,\rightarrow,0,1\rangle$,
see Section~\ref{sec:intro}. 

\subsection{Basic Structures}
Given $\mathbf{L}$, we make use of the following notions: An $\mathbf{L}$-set $A$
in universe $U$ is a map $A\!:U \to L$, $A(u)$ being interpreted as ``the degree
to which $u$ belongs to $A$''. If $\mathbf{L}$ is the two-element Boolean algebra,
then $A\!:U \to L$ is an indicator function of a classic subset of $U$,
$A(u) = 1$ ($A(u) = 0$) meaning that $u$ belongs (does not belong) to that subset.
In our approach, we tacitly identify sets with their indicator functions.
In a similar way, a binary $\mathbf{L}$-relation $B$ on $U$ is a map
$B\!: U \times U \to L$, $B(u_1,u_2)$ interpreted as ``the degree to which
$u_1$ and $u_2$ are related according to $B$''. Hence, $B$ is an $\mathbf{L}$-set
in universe $U \times U$.

\subsection{Ranked Data Tables over Domains with Similarities}
We denote by $Y$ a set of \emph{attributes}, any subset $R \subseteq Y$ is
called a \emph{relation scheme.} For each attribute $y \in Y$ we consider
its \emph{domain $D_y$}. In addition, each $D_y$ is equipped
with a binary $\mathbf{L}$-relation $\approx_y$ on $D_y$ satisfying
reflexivity ($u \approx_y u = 1$) and
symmetry $u \approx_y v = v \approx_y u$ (for all $u,v \in D_y$).
Each binary $\mathbf{L}$-relation $\approx_y$ on $D_y$ satisfying
(i) and (ii) shall be called a \emph{similarity.}
Pair $\langle D_y,\approx_y\rangle$ is called a \emph{domain with similarity.}

Tuples contained in data tables will be considered as usual, i.e., as elements
of Cartesian products of domains. Recall that a Cartesian product
$\prod_{i \in I}D_i$ of an $I$-indexed system $\{D_i \,|\, i \in I\}$ of
sets $D_i$ ($i \in I$) is a set of all maps $t\!: I \to \bigcup_{i \in I}D_i$
such that $t(i) \in D_i$ holds for each $i \in I$. Under this notation,
a~\emph{tuple} over $R \subseteq Y$ is any element from $\prod_{y \in R}D_y$.
For brevity, $\prod_{y \in R}D_y$ is denoted by $\mathrm{Tupl}(R)$.
Following the example in Table~\ref{tab:sales_a}, tuple
$\langle \val{Brown}, \val{142}, \val{950}, \val{50}, \val{Binghamton},
\val{\$189,000}\rangle$ is a map $r \in \mathrm{Tupl}(R)$ for
$R = \{\atr{AGENT},\atr{ID},\ldots,\atr{PRICE}\}$ such that
$r(\atr{AGENT})=\val{Brown}$, $r(\atr{ID})=\val{142}$, etc.

A \emph{ranked data table} on $R \subseteq Y$
over $\{\langle D_y,\approx_y\rangle \,|\, y \in R\}$ (shortly, an RDT)
is any (finite) $\mathbf{L}$-set $\mathcal{D}$ in $\mathrm{Tupl}(R)$.
The degree $\mathcal{D}(r)$ to which $r$ belongs to $\mathcal{D}$ is
called a \emph{rank} of tuple $r$ in $\mathcal{D}$.
According to its definition, if $\mathcal{D}$ is an RDT
on $R$ over $\{\langle D_y,\approx_y\rangle \,|\, y \in R\}$ then
$\mathcal{D}$ is a map $\mathcal{D}\!: \mathrm{Tupl}(R) \to L$.
Note that $\mathcal{D}$ is an $n$-ary $\mathbf{L}$-relation between
domains $D_y$ ($y \in Y$) since $\mathcal{D}$ is a map from
$\textstyle\prod_{y \in R}D_y$ to $L$. In our example,
$\mathcal{D}(r) = 0.86$ for $r$ being the tuple with 
$r(\atr{ID})=\val{142}$.%

\subsection{Relational Operations with RDTs}
Relational operations we consider in this paper are the following:
For RDTs $\mathcal{D}_1$ and $\mathcal{D}_2$ on $T$, we put
$(\mathcal{D}_1 \cup \mathcal{D}_2)(t) =
\mathcal{D}_1(t) \vee \mathcal{D}_2(t)$
and $(\mathcal{D}_1 \cap \mathcal{D}_2)(t) =
\mathcal{D}_1(t) \wedge \mathcal{D}_2(t)$
for each $t \in \mathrm{Tupl}(T)$;
$\mathcal{D}_1 \cup \mathcal{D}_2$ and $\mathcal{D}_1 \cap \mathcal{D}_2$
are called the \emph{union} and the \emph{$\wedge$-intersection} of
$\mathcal{D}_1$ and $\mathcal{D}_2$, respectively.
Analogously, one can define an \emph{$\otimes$-intersection}
$\mathcal{D}_1 \otimes \mathcal{D}_2$.
Hence, $\cup$, $\cap$, and $\otimes$ are defined componentwise
based on the operations of the complete residuated lattice $\mathbf{L}$.

Moreover, our model admits new operations that are trivial in the classic
model. For instance, for $a \in L$, we introduce an
\emph{$a$-shift $a{\rightarrow}\mathcal{D}$} of $\mathcal{D}$ by
$(a{\rightarrow}\mathcal{D})(t) = a \rightarrow \mathcal{D}(t)$
for all $t \in \mathrm{Tupl}(T)$.

\begin{remark}
  Note that if $\mathbf{L}$ is the two-element Boolean algebra then $a$-shift is
  a trivial operation since $1 \rightarrow \mathcal{D} = \mathcal{D}$ and 
  $0 \rightarrow \mathcal{D}$ produces a possibly infinite table
  containing all tuples from $\mathrm{Tupl}(T)$. In our model,
  an $a$-shift has the following meaning:
  If $\mathcal{D}$ is a result of query $Q$ then
  $(a{\rightarrow}\mathcal{D})(t)$ is a ``degree to which $t$ matches query
  $Q$ at least to degree $a$''. This follows from properties of residuum,
  see~\cite{Bel:FRS,Haj:MFL}. Hence, $a$-shifts allow us to emphasize results
  that match queries at least to a prescribed degree $a$.
\end{remark}

The remaining relational operations we consider represent counterparts of
projection, selection, and join in our model.
If $\mathcal{D}$ is an RDT on $T$,
the \emph{projection $\project{R}{\mathcal{D}}$} of $\mathcal{D}$
onto $R \subseteq T$ is defined by
\begin{align*}
  (\project{R}{\mathcal{D}})(r) =
  \textstyle\bigvee_{\!s \in \mathrm{Tupl}(T \setminus R)}\mathcal{D}(rs),
\end{align*}
for each $r \in \mathrm{Tupl}(R)$. In our example, the result of
$\project{\{\atr{LOCATION}\}}{\mathcal{D}}$ is a ranked data table with
single column such that 
$\project{\{\atr{LOCATION}\}}{\mathcal{D}}(\langle\val{Binghamton}\rangle) = 0.86$,
$\project{\{\atr{LOCATION}\}}{\mathcal{D}}(\langle\val{Vestal}\rangle) = 0.93$,
and
$\project{\{\atr{LOCATION}\}}{\mathcal{D}}(\langle\val{Endicott}\rangle) = 0.89$.

A similarity-based selection is a counterpart
to ordinary selection which selects from a data table all tuples which
approximately match a given condition: Let $\mathcal{D}$ be an RDT on $T$
and let $y \in T$ and $d \in D_y$. Then, a
\emph{similarity-based selection $\select{y \approx d}{\mathcal{D}}$}
of tuples in $\mathcal{D}$ matching $y \approx d$ is defined by
\begin{align*}
  \bigl(\select{y \approx d}{\mathcal{D}}\bigr)(t) =
  \mathcal{D}(t) \otimes t(y) \mathop{\approx_y} d.
\end{align*}
Considering $\mathcal{D}$ as a result of query $Q$, the rank of $t$
in $\select{y \approx d}{\mathcal{D}}$ can be interpreted as a degree to which 
``$t$ matches the query $Q$ and the $y$-value of $t$ is similar to $d$''.
In particular, an interesting case is $\select{p \approx q}{\mathcal{D}}$ where
$p$ and $q$ are both attributes with a common domain with similarity.

Similarity-based joins are considered as derived operations based on Cartrsian
products and similarity-based selections.
For $r \in \mathrm{Tupl}(R)$ and $s \in \mathrm{Tupl}(S)$ such that
$R \cap S = \emptyset$, we define a concatenation $rs \in \mathrm{Tupl}(R \cup S)$
of tuples $r$ and $s$ so that $(rs)(y) = r(y)$ for $y \in R$
and $(rs)(y) = s(y)$ for $y \in S$.
For RDTs $\mathcal{D}_1$ and $\mathcal{D}_2$ on disjoint relation schemes
$S$ and $T$ we define a RDT  $\mathcal{D}_1 \times \mathcal{D}_2$ on $S \cup T$,
called a \emph{Cartesian product} of $\mathcal{D}_1$ and $\mathcal{D}_2$, by
$(\mathcal{D}_1 \times \mathcal{D}_2)(st) =
\mathcal{D}_1(s) \otimes \mathcal{D}_2(t)$.
Using Cartesian products and similarity-based selections,
we can introduce \emph{similarity-based $\theta$-joins} such as 
$\join{p \approx q}{\mathcal{D}_1}{\mathcal{D}_2} =
\select{p \approx q}{\mathcal{D}_1 \times \mathcal{D}_2}$.
Various other types of similarity-based joins can be introduced in our model,
see~\cite{BeVy:Qssbd}.

% \begin{remark}
%   Note that complex similarity-based queries can be formed as expressions by
%   means of the aforementioned operations. In addition to that, all operations
%   introduced in our model are indeed relational, i.e. can be expressed by
%   formulas of domain relational calculus, see~\cite{XXX}. With careful handling
%   of technical details, one can show that the expressive
%   power of such relational operations equals to the expressive power of
%   domain relational calculus in our model. It this sense, the relational
%   operations provide the most general framework for similarity-based queries.
%   % Also note that if $\mathbf{L}$ is the two-valued Boolean algebra, the
%   % operations become either trivial (as in case of $a$-shift) or become the
%   % ordinary operations of Codd's model.
% \end{remark}

%%%%%%%%%%%%%%%%%%%%%%%%%%%%%%%%%%%%%%%%%%%%%%%%%%%%%%%%%%%%%%%%%%%%%%%%%%%%%%%%
%%%%%   VYSLEDKY
%%%%%%%%%%%%%%%%%%%%%%%%%%%%%%%%%%%%%%%%%%%%%%%%%%%%%%%%%%%%%%%%%%%%%%%%%%%%%%%%
\section{Estimations of Sensitivity of Query Results}\label{sec:simil}

%%%%%%%%%%%%%%%%%%%%%%%%%%%%%%%%%%%%%%%%%%%%%%%%%%%%%%%%%%%%%%%%%%%%%%%%%%%%%%%%
\subsection{Rank-Based Similarity of Query Results}\label{sec:rnk}
We now introduce the notion of similarity of RDTs which is based on the idea
that RDTs $\mathcal{D}_1$ and $\mathcal{D}_2$ (on the same relation scheme) are
similar if{}f for each tuple $t$, ranks $\mathcal{D}_1(t)$ and $\mathcal{D}_2(t)$
are similar (degrees from $\mathbf{L}$). Similarity of ranks can be expressed by
biresiduum $\leftrightarrow$ (a fuzzy equivalence~\cite{Bel:FRS,Got:Mfl,Haj:MFL})
which is a derived operation of $\mathbf{L}$ such that
$a \leftrightarrow b = (a \rightarrow b) \wedge (b \rightarrow a)$.
Since we are interested in similarity of $\mathcal{D}_1(t)$ and $\mathcal{D}_2(t)$
for all possible tuples $t$, it is straightforward to define the similarity
$E(\mathcal{D}_1,\mathcal{D}_2)$ of $\mathcal{D}_1$ and $\mathcal{D}_2$
by an infimum which goes over all tuples:
\begin{align}
  E(\mathcal{D}_1,\mathcal{D}_2) &=
  \textstyle\bigwedge\nolimits_{t \in \mathrm{Tupl}(T)}
  \bigl(\mathcal{D}_1(t) \leftrightarrow \mathcal{D}_2(t)\bigr).
  \label{def:E}
\end{align}
An alternative (but equivalent) way is the following: we first formalize a degree
$S(\mathcal{D}_1,\mathcal{D}_2)$ to which $\mathcal{D}_1$ is included
in $\mathcal{D}_2$. We can say that $\mathcal{D}_1$ is fully included in
$\mathcal{D}_2$ if{}f, for each tuple $t$, the rank $\mathcal{D}_2(t)$ is
at least as high as the rank $\mathcal{D}_1(t)$. Notice that in the classic
(two-values) case, this is exactly how one defines the ordinary subsethood
relation ``$\subseteq$''. Considering general degrees of inclusion (subsethood),
a degree $S(\mathcal{D}_1,\mathcal{D}_2)$ to which $\mathcal{D}_1$ is
included in $\mathcal{D}_2$ can be defined as follows:
\begin{align}
  S(\mathcal{D}_1,\mathcal{D}_2) &=
  \textstyle\bigwedge\nolimits_{t \in \mathrm{Tupl}(T)}
  \bigl(\mathcal{D}_1(t) \rightarrow \mathcal{D}_2(t)\bigr).
  \label{def:S}
\end{align}
It is easy to prove~\cite{Bel:FRS} that \eqref{def:E} and \eqref{def:S} satisfy:
\begin{align}
  E(\mathcal{D}_1,\mathcal{D}_2) &=
  S(\mathcal{D}_1,\mathcal{D}_2) \wedge
  S(\mathcal{D}_2,\mathcal{D}_1).
  \label{eqn:ES}
\end{align}
Note that $E$ and $S$ defined by \eqref{def:E} and \eqref{def:S} are
known as degrees of similarity and subsethood from general fuzzy
relational systems~\cite{Bel:FRS} (in this case, the fuzzy relations are RDTs).

The following assertion shows that $\cup$, $\cap$, $\otimes$, and $a$-shifts
preserve subsethood degrees given by~\eqref{def:S}. In words, the degree to
which $\mathcal{D}_1 \cup \mathcal{D}_2$ is included
in $\mathcal{D}'_1 \cup \mathcal{D}'_2$ is at least as high as the degree
to which $\mathcal{D}_1$ is included in $\mathcal{D}'_1$ and 
$\mathcal{D}_2$ is included in $\mathcal{D}'_2$. A similar verbal description
can be made for the other operations.

\begin{theorem}\label{th:inc_op}
  For any $\mathcal{D}_1$, $\mathcal{D}'_1$,
  $\mathcal{D}_2$, and $\mathcal{D}'_2$ on relation scheme $T$,
  \begin{align}
    S(\mathcal{D}_1,\mathcal{D}'_1) \wedge S(\mathcal{D}_2,\mathcal{D}'_2)
    &\leq
    S(\mathcal{D}_1 \cup \mathcal{D}_2,
    \mathcal{D}'_1 \cup \mathcal{D}'_2),
    \label{eqn:inc_union} \\
    S(\mathcal{D}_1,\mathcal{D}'_1) \wedge S(\mathcal{D}_2,\mathcal{D}'_2)
    &\leq
    S(\mathcal{D}_1 \cap \mathcal{D}_2,
    \mathcal{D}'_1 \cap \mathcal{D}'_2),
    \label{eqn:inc_intersection} \\
    S(\mathcal{D}_1,\mathcal{D}'_1) \otimes S(\mathcal{D}_2,\mathcal{D}'_2)
    &\leq
    S(\mathcal{D}_1 \otimes \mathcal{D}_2,
    \mathcal{D}'_1 \otimes \mathcal{D}'_2),
    \label{eqn:inc_otimes} \\
    % S(\mathcal{D}_1,\mathcal{D}'_1) \otimes S(\mathcal{D}_2,\mathcal{D}'_2)
    % &\leq
    % S(\mathcal{D}_1 \rightarrow \mathcal{D}_2,
    % \mathcal{D}'_1 \rightarrow \mathcal{D}'_2).
    % \label{eqn:inc_impl}\\
    S(\mathcal{D}_1, \mathcal{D}_2)
    &\leq 
    S(a\rightarrow \mathcal{D}_1, a \rightarrow \mathcal{D}_2).
    \label{eqn:inc_a_shift}
    % S(\mathcal{D}_1, \mathcal{D}_2)
    % &\leq
    % S(a\otimes \mathcal{D}_1, a \otimes \mathcal{D}_2)
    % \label{eqn:inc_a_multiple} \\
    % S(\mathcal{D}_1, \mathcal{D}_2)
    % &\leq
    % S(\neg\mathcal{D}_2, \neg\mathcal{D}_1)
    % \label{eqn:inc_neq}\\
%    S(\mathcal{D}_1, \mathcal{D}_2)^{*}
%    &\leq
%    S(\mathcal{D}_{1}^*, \mathcal{D}_{2}^{*})
%    \label{eqn_inc_hedge}
  \end{align}
\end{theorem}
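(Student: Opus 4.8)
Each of the four inequalities has the same shape: the left-hand side is an infimum (or $\otimes$-combination of infima) of biresiduum-free terms $\mathcal{D}_i(t)\rightarrow\mathcal{D}'_i(t)$, and the right-hand side is again an infimum of such terms, but for the combined RDTs. So the overall strategy is to reduce each global (infimum-level) inequality to a \emph{pointwise} inequality holding for every fixed tuple $t$, then discharge the pointwise inequality using standard residuated-lattice identities. The reduction itself rests on two elementary facts about the residuum and infima: that residuum is monotone in its second argument and antitone in its first, and that $\bigwedge_t(x_t)\rightarrow\mathcal{D}'(t)$ behaves well under the substitutions we make. I would state these supporting lemmas (or cite them from \cite{Bel:FRS}) once and reuse them.

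**First I would handle \eqref{eqn:inc_union} and \eqref{eqn:inc_intersection} together.** Since $\cup$ and $\cap$ are defined componentwise by $\vee$ and $\wedge$, it suffices to prove, for each fixed $t$, the pointwise estimates
\begin{align*}
  \bigl(\mathcal{D}_1(t)\rightarrow\mathcal{D}'_1(t)\bigr)\wedge
  \bigl(\mathcal{D}_2(t)\rightarrow\mathcal{D}'_2(t)\bigr)
  &\leq
  \bigl(\mathcal{D}_1(t)\vee\mathcal{D}_2(t)\bigr)\rightarrow
  \bigl(\mathcal{D}'_1(t)\vee\mathcal{D}'_2(t)\bigr),\\
  \bigl(\mathcal{D}_1(t)\rightarrow\mathcal{D}'_1(t)\bigr)\wedge
  \bigl(\mathcal{D}_2(t)\rightarrow\mathcal{D}'_2(t)\bigr)
  &\leq
  \bigl(\mathcal{D}_1(t)\wedge\mathcal{D}_2(t)\bigr)\rightarrow
  \bigl(\mathcal{D}'_1(t)\wedge\mathcal{D}'_2(t)\bigr),
\end{align*}
because taking the infimum over $t$ on both sides, together with the fact that an infimum of meets dominates the meet of infima, yields the claimed global inequalities. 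Each pointwise estimate follows from the well-known lattice-residuum laws $(a\rightarrow b)\wedge(c\rightarrow d)\leq (a\vee c)\rightarrow(b\vee d)$ and $\leq (a\wedge c)\rightarrow(b\wedge d)$; I would verify these by adjunction \eqref{eqn:adj}, e.g. multiplying the left side by $a\vee c$ and checking it stays below $b\vee d$.

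**For \eqref{eqn:inc_otimes}** the pattern is the same but the outer combination is $\otimes$ rather than $\wedge$, so I would reduce to the pointwise law $(a\rightarrow b)\otimes(c\rightarrow d)\leq (a\otimes c)\rightarrow(b\otimes d)$. This is the standard fact that $\otimes$ is \emph{compatible} with the residuum; it is proved by adjunction, using associativity and commutativity of $\otimes$ to rearrange $(a\rightarrow b)\otimes(c\rightarrow d)\otimes(a\otimes c)$ and then applying $a\otimes(a\rightarrow b)\leq b$ twice. Passing from the pointwise law to the global one requires $\bigl(\bigwedge_t x_t\bigr)\otimes\bigl(\bigwedge_t y_t\bigr)\leq\bigwedge_t(x_t\otimes y_t)$, which holds because $\otimes$ is monotone and the infima lie below each component. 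Finally, \eqref{eqn:inc_a_shift} is the lightest case: here only one pair of RDTs is involved, and since $(a{\rightarrow}\mathcal{D})(t)=a\rightarrow\mathcal{D}(t)$, the required pointwise inequality is $\bigl(\mathcal{D}_1(t)\rightarrow\mathcal{D}_2(t)\bigr)\leq\bigl(a\rightarrow\mathcal{D}_1(t)\bigr)\rightarrow\bigl(a\rightarrow\mathcal{D}_2(t)\bigr)$, an instance of the general monotonicity law $(b\rightarrow c)\leq (a\rightarrow b)\rightarrow(a\rightarrow c)$.

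**The main obstacle** is not any single step — each pointwise law is routine — but rather making the infimum-to-pointwise reduction airtight in the $\otimes$ case, since there one cannot simply pull the infimum out past $\otimes$ as an equality; only the inequality $\bigl(\bigwedge_t x_t\bigr)\otimes\bigl(\bigwedge_t y_t\bigr)\leq\bigwedge_t(x_t\otimes y_t)$ is available, and one must check it points in the direction the proof needs. I expect this is exactly why the statement uses $\otimes$ on the left of \eqref{eqn:inc_otimes} (matching the $\otimes$-intersection on the right) rather than $\wedge$: the weaker $\otimes$-combination on the left is what the monotonicity of $\otimes$ through the infimum actually delivers.
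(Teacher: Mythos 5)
Your proposal is correct and follows essentially the same route as the paper's proof: both arguments reduce each claim via adjointness \eqref{eqn:adj} to a pointwise verification resting on distributivity of $\otimes$ over $\vee$, the inequality $a \otimes (a \rightarrow b) \leq b$, and the law $(b \rightarrow c) \leq (a \rightarrow b) \rightarrow (a \rightarrow c)$ for \eqref{eqn:inc_a_shift}, and your packaging of the pointwise steps as standalone residuum-transfer lemmas, with the infima handled by sub-distributivity of $\wedge$ and $\otimes$ over $\bigwedge$, is only a cosmetic reorganization of the paper's chain of inequalities. One correction to your closing aside: the reason \eqref{eqn:inc_otimes} needs $\otimes$ rather than $\wedge$ on the left is not the infimum-to-pointwise exchange (which for $\wedge$ even holds with equality, $\bigwedge_t x_t \wedge \bigwedge_t y_t = \bigwedge_t (x_t \wedge y_t)$) but the failure of the pointwise law $(a \rightarrow b) \wedge (c \rightarrow d) \leq (a \otimes c) \rightarrow (b \otimes d)$ in general, e.g.\ in the \L ukasiewicz algebra with $a = c = 1$ and $b = d = 0.5$ the left-hand side is $0.5$ while the right-hand side is $1 \rightarrow 0 = 0$.
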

\begin{proof}[sketch]
  \eqref{eqn:inc_union}:
  Using adjointness, it suffices to check that
  $\bigl(S(\mathcal{D}_1,\mathcal{D}'_1) \wedge
  S(\mathcal{D}_2,\mathcal{D}'_2)\bigr) \otimes
  (\mathcal{D}_1 \cup \mathcal{D}_2)(t) \leq
  (\mathcal{D}'_1 \cup \mathcal{D}'_2)(t)$ holds
  true for any $t \in \mathrm{Tupl}(T)$. Using 
  \eqref{def:S}, the monotony of $\otimes$ and $\wedge$ yields
  $\bigl(S(\mathcal{D}_1,\mathcal{D}'_1) \wedge
    S(\mathcal{D}_2,\mathcal{D}'_2)\bigr) \otimes
    (\mathcal{D}_1 \cup \mathcal{D}_2)(t) \leq
    % &\bigl((\mathcal{D}_1(t) \rightarrow \mathcal{D}'_1(t))
    % \wedge (\mathcal{D}_2(t) \rightarrow \mathcal{D}'_2(t)) \bigr)
    % \otimes (\mathcal{D}_1 \cup \mathcal{D}_2)(t) = \\
    \bigl((\mathcal{D}_1(t) \rightarrow \mathcal{D}'_1(t))
    \wedge (\mathcal{D}_2(t) \rightarrow \mathcal{D}'_2(t))\bigr)
    \otimes (\mathcal{D}_1(t) \vee \mathcal{D}_2(t))$.
  Applying $a \otimes (b \vee c) = (a \otimes b) \vee (a \otimes c)$
  to the latter expression, we get
%  \begin{align*}
$    \bigl((\mathcal{D}_1(t) \rightarrow \mathcal{D}'_1(t)) \wedge
    (\mathcal{D}_2(t) \rightarrow \mathcal{D}'_2(t)) \bigr) \otimes
    (\mathcal{D}_1(t) \vee \mathcal{D}_2(t))
    % = \\
    % &\bigl(\bigl((\mathcal{D}_1(t) \rightarrow \mathcal{D}'_1(t)) \wedge
    % (\mathcal{D}_2(t) \rightarrow \mathcal{D}'_2(t))\bigr)
    % \otimes \mathcal{D}_1(t)\bigr) \mathop{\vee} \\
    % &\bigl(\bigl((\mathcal{D}_1(t) \rightarrow \mathcal{D}'_1(t)) \wedge
    % (\mathcal{D}_2(t) \rightarrow \mathcal{D}'_2(t)) \bigr) \otimes
    % \mathcal{D}_2(t)\bigr)
    \leq
    \bigl((\mathcal{D}_1(t) \rightarrow \mathcal{D}'_1(t)) \otimes
    \mathcal{D}_1(t)\bigr) \vee \bigl((\mathcal{D}_2(t) \rightarrow
    \mathcal{D}'_2(t)) \otimes \mathcal{D}_2(t)\bigr)$.
%  \end{align*}
  Using $a \otimes (a \rightarrow b) \leq b$ twice, it follows that
%  \begin{align*}
    $\bigl((\mathcal{D}_1(t) \rightarrow \mathcal{D}'_1(t)) \otimes
    \mathcal{D}_1(t)\bigr) \vee \bigl((\mathcal{D}_2(t) \rightarrow
    \mathcal{D}'_2(t)) \otimes \mathcal{D}_2(t)\bigr) \leq \mathcal{D}'_1(t)
    \vee \mathcal{D}'_2(t)$.
%  \end{align*}
  Putting previous inequalities together,
  $\bigl(S(\mathcal{D}_1,\mathcal{D}'_1) \wedge
  S(\mathcal{D}_2,\mathcal{D}'_2)\bigr) \otimes
  (\mathcal{D}_1 \cup \mathcal{D}_2)(t) \leq 
  (\mathcal{D}'_1 \cup \mathcal{D}'_2)(t)$ which proves \eqref{eqn:inc_union}.
  \eqref{eqn:inc_intersection} can be proved analogously
  as \eqref{eqn:inc_union};
  % Indeed, for any $t \in \mathrm{Tupl}(T)$,
  % \begin{align*}
  %   &\bigl(S(\mathcal{D}_1,\mathcal{D}'_1) \wedge
  %   S(\mathcal{D}_2,\mathcal{D}'_2)\bigr) \otimes
  %   (\mathcal{D}_1 \cap \mathcal{D}_2)(t) \leq \\
  %   &\bigl((\mathcal{D}_1(t) \rightarrow \mathcal{D}'_1(t)) \wedge
  %   (\mathcal{D}_2(t) \rightarrow \mathcal{D}'_2(t))\bigr) \otimes
  %   (\mathcal{D}_1 \cap \mathcal{D}_2)(t),
  % \end{align*}
  % Now, using $(a \wedge b) \otimes c \leq (a \otimes c) \wedge (b \otimes c)$,
  % we get
  % \begin{align*}
  %   &\bigl((\mathcal{D}_1(t) \rightarrow \mathcal{D}'_1(t)) \wedge
  %   (\mathcal{D}_2(t) \rightarrow \mathcal{D}'_2(t)) \bigr)
  %   \otimes (\mathcal{D}_1(t) \wedge \mathcal{D}_2(t)) \leq \\
  %   &\bigl((\mathcal{D}_1(t) \rightarrow \mathcal{D}'_1(t)) \otimes
  %   (\mathcal{D}_1(t) \wedge \mathcal{D}_2(t))\bigr) \wedge
  %   \bigl((\mathcal{D}_2(t) \rightarrow \mathcal{D}'_2(t)) \otimes
  %   (\mathcal{D}_1(t) \wedge \mathcal{D}_2(t)) \bigr) \leq \\
  %   &\bigl((\mathcal{D}_1(t) \rightarrow \mathcal{D}'_1(t))
  %   \otimes \mathcal{D}_1(t) \bigr) \wedge
  %   \bigl((\mathcal{D}_2(t) \rightarrow \mathcal{D}'_2(t)) \otimes
  %   \mathcal{D}_2(t) \bigr) \leq \\
  %   &\mathcal{D}'_1(t) \wedge \mathcal{D}'_2(t) =
  %   (\mathcal{D}'_1 \cap \mathcal{D}'_2)(t)
  % \end{align*}
  % which proves \eqref{eqn:inc_intersection}.
  \eqref{eqn:inc_otimes} can be proved analogously as
  \eqref{eqn:inc_intersection} using monotony of $\otimes$;
%  $(a \rightarrow b) \otimes (b \rightarrow c) \leq (a \rightarrow c)$;
  \eqref{eqn:inc_a_shift} follows from the fact that
  $a\rightarrow b \leq (c\rightarrow a)\rightarrow (c\rightarrow b)$.
%  Details are postponed to a full version of the paper.
  \qed
  %       \begin{align*}
  %       &S(\mathcal{D}_1,\mathcal{D}_2) \leq 
  % \mathcal{D}_1(t) \rightarrow \mathcal{D}_2(t) \leq 
  %       (a\rightarrow \mathcal{D}_1(t)) \rightarrow (a\rightarrow \mathcal{D}_2(t)) = \\
  % &\bigl(a\rightarrow\mathcal{D}_1\bigr)(t) \rightarrow \bigl(a\rightarrow\mathcal{D}_2\bigr)(t) =
  %       S(a\rightarrow\mathcal{D}_1, a\rightarrow\mathcal{D}_2)     
  %       \end{align*}
% \eqref{eqn:inc_a_multiple}:
% 	\begin{align*}
% 	&S(\mathcal{D}_1, \mathcal{D}_2)(t) =
%    \mathcal{D}_1(t) \rightarrow \mathcal{D}_2(t) \leq 
% 	\bigl(a\otimes \mathcal{D}_1(t)\bigr) \rightarrow \bigl(a\otimes \mathcal{D}_2(t)\bigr) = \\
%   &\bigl(a\otimes\mathcal{D}_1\bigr)(t) \rightarrow \bigl(a\otimes\mathcal{D}_2\bigr)(t) =
% 	S(a\otimes\mathcal{D}_1, a\otimes\mathcal{D}_2)     
% 	\end{align*}

% \eqref{eqn:inc_neq}:
% 	\begin{align*}
% 	&S(\mathcal{D}_1, \mathcal{D}_2)(t) =
%   (\mathcal{D}_1(t) \rightarrow \mathcal{D}_2(t)) \leq 
% 	\bigl(\mathcal{D}_2(t) \rightarrow 0\bigr) \rightarrow \bigl(\mathcal{D}_1(t) \rightarrow 0\bigr) =\\
%   &\bigl(\neg\mathcal{D}_2\bigr)(t) \rightarrow \bigl(\neg\mathcal{D}_1\bigr)(t) = 
% 	S(\neg\mathcal{D}_2, \neg\mathcal{D}_1)     
%   \end{align*}
% \eqref{eqn_inc_hedge}:
% 	\begin{align*}
% 	&\bigl(S(\mathcal{D}_1, \mathcal{D}_2)\bigr)^* =
% 	\bigl(\mathcal{D}_1(t) \rightarrow \mathcal{D}_2(t)\bigr)^* \leq 
% 	\bigl(\mathcal{D}_1(t)\bigr)^* \rightarrow \bigl(\mathcal{D}_2(t)\bigr)^* =
% 	S(\mathcal{D}_{1}^*, \mathcal{D}_{2}^*)     
% 	\end{align*}
%   \qed
\end{proof}

Using~\eqref{eqn:ES}, we have the following consequence of
Theorem~\ref{th:inc_op}:

\begin{corollary}\label{thm:sim_setop}
  For $\lozenge$ being $\cap$ and $\cup$, we have:
  \begin{align}
    E(\mathcal{D}_1,\mathcal{D}'_1) \wedge E(\mathcal{D}_2,\mathcal{D}'_2)
    &\leq
    E(\mathcal{D}_1 \mathop{\lozenge} \mathcal{D}_2,
    \mathcal{D}'_1 \mathop{\lozenge} \mathcal{D}'_2).
    \label{eqn:sim_wedge_op} \\
    E(\mathcal{D}_1,\mathcal{D}'_1) \otimes E(\mathcal{D}_2,\mathcal{D}'_2)
    &\leq
    E(\mathcal{D}_1 \otimes \mathcal{D}_2,
    \mathcal{D}'_1 \otimes \mathcal{D}'_2).
    \label{eqn:sim_otimes_op} \\
    E(\mathcal{D}_1,\mathcal{D}_2)
    &\leq
    E(a \rightarrow \mathcal{D}_1, a \rightarrow \mathcal{D}_2).
   \label{eqn:sim_ashift_op}
 \end{align}
\end{corollary}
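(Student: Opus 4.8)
The plan is to derive all three inequalities from Theorem~\ref{th:inc_op} by rewriting every occurrence of $E$ through the identity~\eqref{eqn:ES}, namely $E(\mathcal{X},\mathcal{Y}) = S(\mathcal{X},\mathcal{Y}) \wedge S(\mathcal{Y},\mathcal{X})$, and then exploiting the monotony of $\wedge$ and $\otimes$. The common pattern is that each desired bound on $E$ splits into two symmetric bounds on $S$ (a ``forward'' one and a ``backward'' one), each of which is already furnished by Theorem~\ref{th:inc_op}; the symmetry of those bounds under interchanging the primed and unprimed families is precisely what supplies the backward estimate, so that the infimum defining $E$ on the right-hand side is matched from both sides.

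For~\eqref{eqn:sim_wedge_op}, first I would expand the left-hand side as $\bigl(S(\mathcal{D}_1,\mathcal{D}'_1) \wedge S(\mathcal{D}'_1,\mathcal{D}_1)\bigr) \wedge \bigl(S(\mathcal{D}_2,\mathcal{D}'_2) \wedge S(\mathcal{D}'_2,\mathcal{D}_2)\bigr)$ and, again via~\eqref{eqn:ES}, the right-hand side as $S(\mathcal{D}_1 \lozenge \mathcal{D}_2, \mathcal{D}'_1 \lozenge \mathcal{D}'_2) \wedge S(\mathcal{D}'_1 \lozenge \mathcal{D}'_2, \mathcal{D}_1 \lozenge \mathcal{D}_2)$. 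Since the target is an infimum of two terms, it suffices to bound the left-hand side by each term separately. Dropping the two backward conjuncts from the left-hand side and applying~\eqref{eqn:inc_union} (resp.~\eqref{eqn:inc_intersection}) bounds it by the forward term $S(\mathcal{D}_1 \lozenge \mathcal{D}_2, \mathcal{D}'_1 \lozenge \mathcal{D}'_2)$; dropping the two forward conjuncts and applying the same inequality with the roles of primed and unprimed RDTs interchanged bounds it by the backward term. Taking the infimum of the two bounds yields~\eqref{eqn:sim_wedge_op} for both $\lozenge = \cup$ and $\lozenge = \cap$.

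The case~\eqref{eqn:sim_otimes_op} follows the same scheme but requires slightly more care, since the two $E$-terms on the left are combined by $\otimes$ rather than $\wedge$. After expanding both sides by~\eqref{eqn:ES}, I would use the monotony of $\otimes$ together with $x \wedge y \leq x$ and $x \wedge y \leq y$ to obtain $\bigl(S(\mathcal{D}_1,\mathcal{D}'_1) \wedge S(\mathcal{D}'_1,\mathcal{D}_1)\bigr) \otimes \bigl(S(\mathcal{D}_2,\mathcal{D}'_2) \wedge S(\mathcal{D}'_2,\mathcal{D}_2)\bigr) \leq S(\mathcal{D}_1,\mathcal{D}'_1) \otimes S(\mathcal{D}_2,\mathcal{D}'_2)$ and, symmetrically, $\leq S(\mathcal{D}'_1,\mathcal{D}_1) \otimes S(\mathcal{D}'_2,\mathcal{D}_2)$. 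Applying~\eqref{eqn:inc_otimes} to each right-hand side bounds the left-hand side by $S(\mathcal{D}_1 \otimes \mathcal{D}_2, \mathcal{D}'_1 \otimes \mathcal{D}'_2)$ and by $S(\mathcal{D}'_1 \otimes \mathcal{D}'_2, \mathcal{D}_1 \otimes \mathcal{D}_2)$ respectively, and the infimum of these two is exactly $E(\mathcal{D}_1 \otimes \mathcal{D}_2, \mathcal{D}'_1 \otimes \mathcal{D}'_2)$ by~\eqref{eqn:ES}. The $a$-shift inequality~\eqref{eqn:sim_ashift_op} is the most direct: expanding both sides by~\eqref{eqn:ES} and applying~\eqref{eqn:inc_a_shift} to each of the two conjuncts, then using monotony of $\wedge$, gives the claim immediately.

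I expect no genuine obstacle, as the corollary is a formal consequence of Theorem~\ref{th:inc_op}; the only point demanding attention is the $\otimes$-case, where one must resist replacing $\otimes$ by $\wedge$ and instead invoke the monotony of $\otimes$ to split $(x \wedge y)\otimes(u \wedge v)$ into the two products required by~\eqref{eqn:inc_otimes}.
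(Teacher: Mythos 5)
Your proof is correct and follows essentially the same route as the paper: expand each $E$ via \eqref{eqn:ES}, apply the relevant inequality from Theorem~\ref{th:inc_op} twice (once with the primed and unprimed RDTs interchanged), and combine the two resulting bounds into the infimum defining $E$. Your explicit handling of the $\otimes$-case via monotony of $\otimes$ and $x \wedge y \leq x$ is exactly the detail the paper leaves implicit under ``the rest is analogous.''
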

\begin{proof}[sketch]
  For $\lozenge$ being $\cap$, \eqref{eqn:inc_intersection}
  applied twice yields:
  $S(\mathcal{D}_1,\mathcal{D}'_1) \wedge S(\mathcal{D}_2,\mathcal{D}'_2)
  \leq
  S(\mathcal{D}_1 \cap \mathcal{D}_2,
  \mathcal{D}'_1 \cap \mathcal{D}'_2)$
  and
  $S(\mathcal{D}'_1,\mathcal{D}_1) \wedge S(\mathcal{D}'_2,\mathcal{D}_2)
  \leq
  S(\mathcal{D}'_1 \cap \mathcal{D}'_2,
  \mathcal{D}_1 \cap \mathcal{D}_2)$.
  Hence, \eqref{eqn:sim_wedge_op} for $\cap$ follows %from the previous
  using \eqref{def:E}. The rest is analogous.
  %facts 
% Analogously, for $\lozenge$ being $\cup$ the
 %  claim follows from \eqref{eqn:inc_union} and \eqref{def:E}. For
 %  $\blacklozenge$ being $\otimes$ or $\rightarrow$,
 %  \eqref{eqn:inc_otimes} or \eqref{eqn:inc_impl} used twice
 %  together with \eqref{def:E} yield
 %  \begin{align*}
 %    (S(\mathcal{D}_1,\mathcal{D}'_1) \otimes S(\mathcal{D}_2,\mathcal{D}'_2))
 %    \wedge
 %    (S(\mathcal{D}'_1,\mathcal{D}_1) \otimes S(\mathcal{D}'_2,\mathcal{D}_2))
 %    &\leq
 %    E(\mathcal{D}_1 \mathop{\blacklozenge} \mathcal{D}_2,
 %    \mathcal{D}'_1 \mathop{\blacklozenge} \mathcal{D}'_2).
 %  \end{align*}
 %  Applying $(a \wedge c) \otimes (b \wedge d) \leq
 %  (a \otimes b) \wedge (c \otimes d)$ to the previous inequality, we get
 %  \begin{align*}
 %    (S(\mathcal{D}_1,\mathcal{D}'_1) \wedge S(\mathcal{D}'_1,\mathcal{D}_1))
 %    \otimes
 %    (S(\mathcal{D}_2,\mathcal{D}'_2) \wedge S(\mathcal{D}'_2,\mathcal{D}_2))
 %    &\leq
 %    E(\mathcal{D}_1 \mathop{\blacklozenge} \mathcal{D}_2,
 %    \mathcal{D}'_1 \mathop{\blacklozenge} \mathcal{D}'_2),
 %  \end{align*}
 %  which directly gives the desired inequality.
  \qed
\end{proof}

Using the idea in the proof of Corollary~\ref{thm:sim_setop}, in order to prove
that operation $O$ preserves similarity, it suffices to check that $O$
preserves (graded) subsethood. Thus, from now on, we shall only investigate
whether operations preserve subsethood. In case of Cartesian products,
we have:

\begin{theorem}\label{thm:sim_cart*}
  Let $\mathcal{D}_1$ and $\mathcal{D}'_1$ be RDTs on relation scheme $S$ and
  let $\mathcal{D}_2$ and $\mathcal{D}'_2$ be RDTs on relation scheme $T$
  such that $S \cap T = \emptyset$. Then,
  \begin{align}
    S(\mathcal{D}_1,\mathcal{D}'_1) \otimes S(\mathcal{D}_2,\mathcal{D}'_2)
    &\leq
    S(\mathcal{D}_1 \times \mathcal{D}_2,
    \mathcal{D}'_1 \times \mathcal{D}'_2),
    \label{eqn:inc_cart*} 
% \\
%     E(\mathcal{D}_1,\mathcal{D}'_1) \otimes E(\mathcal{D}_2,\mathcal{D}'_2)
%     &\leq
%     E(\mathcal{D}_1 \times \mathcal{D}_2,
%     \mathcal{D}'_1 \times \mathcal{D}'_2).
%     \label{eqn:sim_cart*}
  \end{align}
\end{theorem}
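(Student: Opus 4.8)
The plan is to reduce the statement to a pointwise verification using adjointness, exactly in the spirit of the proof of \eqref{eqn:inc_union}. By the adjunction \eqref{eqn:adj}, the inequality \eqref{eqn:inc_cart*} is equivalent to showing that for every concatenated tuple $st \in \mathrm{Tupl}(S \cup T)$ (with $s \in \mathrm{Tupl}(S)$ and $t \in \mathrm{Tupl}(T)$),
\begin{align*}
  \bigl(S(\mathcal{D}_1,\mathcal{D}'_1) \otimes S(\mathcal{D}_2,\mathcal{D}'_2)\bigr)
  \otimes (\mathcal{D}_1 \times \mathcal{D}_2)(st)
  \leq
  (\mathcal{D}'_1 \times \mathcal{D}'_2)(st).
\end{align*}
Recalling the definition of the Cartesian product, $(\mathcal{D}_1 \times \mathcal{D}_2)(st) = \mathcal{D}_1(s) \otimes \mathcal{D}_2(t)$, so the left-hand side becomes $S(\mathcal{D}_1,\mathcal{D}'_1) \otimes S(\mathcal{D}_2,\mathcal{D}'_2) \otimes \mathcal{D}_1(s) \otimes \mathcal{D}_2(t)$.

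The key step is then to regroup this product using the commutativity and associativity of $\otimes$ (recall $\langle L,\otimes,1\rangle$ is a commutative monoid) so that each subsethood degree sits next to the matching rank:
\begin{align*}
  \bigl(S(\mathcal{D}_1,\mathcal{D}'_1) \otimes \mathcal{D}_1(s)\bigr)
  \otimes
  \bigl(S(\mathcal{D}_2,\mathcal{D}'_2) \otimes \mathcal{D}_2(t)\bigr).
\end{align*}
From the definition \eqref{def:S}, $S(\mathcal{D}_1,\mathcal{D}'_1) \leq \mathcal{D}_1(s) \rightarrow \mathcal{D}'_1(s)$ and likewise for the second pair, so by monotony of $\otimes$ together with $a \otimes (a \rightarrow b) \leq b$ we obtain $S(\mathcal{D}_1,\mathcal{D}'_1) \otimes \mathcal{D}_1(s) \leq \mathcal{D}'_1(s)$ and $S(\mathcal{D}_2,\mathcal{D}'_2) \otimes \mathcal{D}_2(t) \leq \mathcal{D}'_2(t)$. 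Applying monotony of $\otimes$ once more bounds the whole product by $\mathcal{D}'_1(s) \otimes \mathcal{D}'_2(t) = (\mathcal{D}'_1 \times \mathcal{D}'_2)(st)$, which is exactly what adjointness requires.

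The mechanics here are entirely routine; the only point demanding care is that this is a genuinely \emph{multiplicative} estimate, so I would use $\otimes$ (not $\wedge$) when combining the two subsethood degrees, matching the form of \eqref{eqn:inc_otimes} rather than \eqref{eqn:inc_union}. The mild obstacle is bookkeeping: I must keep the factors belonging to scheme $S$ separate from those belonging to scheme $T$ and exploit the disjointness $S \cap T = \emptyset$, which guarantees that every tuple on $S \cup T$ factors uniquely as a concatenation $st$ and hence that the supremum implicit in the Cartesian product (there is none beyond this unique factorization) causes no difficulty. No distributivity law of the kind $a \otimes (b \vee c) = (a \otimes b) \vee (a \otimes c)$ used in \eqref{eqn:inc_union} is needed, which makes this proof slightly shorter than that case.
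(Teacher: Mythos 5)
Your proof is correct and follows essentially the same route as the paper, which proves \eqref{eqn:inc_cart*} by the argument of \eqref{eqn:inc_otimes}: reduce via adjointness \eqref{eqn:adj} to a pointwise inequality, regroup the $\otimes$-product, and apply $a \otimes (a \rightarrow b) \leq b$ together with monotony of $\otimes$. Your added observations---that the estimate is multiplicative rather than $\wedge$-based, and that disjointness of $S$ and $T$ gives the unique factorization $u = st$ so no supremum or distributivity is needed---are exactly the bookkeeping the paper leaves implicit.
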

\begin{proof}[sketch]
  The proof is analogous to that of \eqref{eqn:inc_otimes}.
% and is therefore omitted due to the limited scope of the paper.
 % Except for the fact that the RDTs are defined on disjoint relation schemes,
 %  the proof of \eqref{eqn:inc_cart*} is analogous to that of
 %  and is therefore omitted.
% In a more detail, for any
 %  $s \in \mathrm{Tupl}(S)$ and $t \in \mathrm{Tupl}(T)$,
 %  using monotony and $a \otimes (a \rightarrow b) \leq b$,
 %  \begin{align*}
 %    &S(\mathcal{D}_1,\mathcal{D}'_1) \otimes S(\mathcal{D}_2,\mathcal{D}'_2)
 %    \otimes (\mathcal{D}_1 \times \mathcal{D}_2)(st) \leq \\
 %    &(\mathcal{D}_1(s) \rightarrow \mathcal{D}'_1(s)) \otimes
 %    (\mathcal{D}_2(t) \rightarrow \mathcal{D}'_2(t))
 %    \otimes (\mathcal{D}_1 \times \mathcal{D}_2)(st) = \\
 %    &(\mathcal{D}_1(s) \rightarrow \mathcal{D}'_1(s)) \otimes
 %    (\mathcal{D}_2(t) \rightarrow \mathcal{D}'_2(t))
 %    \otimes \mathcal{D}_1(s) \otimes \mathcal{D}_2(t) = \\
 %    &\mathcal{D}_1(s) \otimes (\mathcal{D}_1(s) \rightarrow \mathcal{D}'_1(s))
 %    \otimes
 %    \mathcal{D}_2(t) \otimes (\mathcal{D}_2(t) \rightarrow \mathcal{D}'_2(t))
 %    \leq \\
 %    &\mathcal{D}'_1(s) \otimes \mathcal{D}'_2(t) =
 %    (\mathcal{D}'_1 \times \mathcal{D}'_2)(st),
 %  \end{align*}
 %  proving \eqref{eqn:inc_cart*}. The second claim is a consequence of
 %  \eqref{eqn:inc_cart*} and \eqref{def:E} which can be shown by the same
 %  arguments as in the proof of Corollary~\ref{thm:sim_setop}. 
  \qed
\end{proof}

The following assertion shows that projection and similarity-based selection
preserve subsethood degrees (and therefore similarities) of RDTs:

\begin{theorem}\label{thm:sim_proj_select}
  Let $\mathcal{D}$ and $\mathcal{D}'$ be RDTs on relation scheme $T$ and
  let $y \in T$, $d \in D_y$, and $R \subseteq T$. Then,
  \begin{align}
    S(\mathcal{D},\mathcal{D}') &\leq
    S(\project{R}{\mathcal{D}}, \project{R}{\mathcal{D}'}),
%  &
%     E(\mathcal{D},\mathcal{D}') &\leq
%     E(\project{R}{\mathcal{D}}, \project{R}{\mathcal{D}'}),
    \label{eqn:sim_proj} \\
    S(\mathcal{D},\mathcal{D}') &\leq
    S(\select{y \approx d}{\mathcal{D}}, \select{y \approx d}{\mathcal{D}'}).%  &
    % E(\mathcal{D},\mathcal{D}') &\leq
    % E(\select{y \approx d}{\mathcal{D}}, \select{y \approx d}{\mathcal{D}'}).
    \label{eqn:sim_select}
  \end{align}
\end{theorem}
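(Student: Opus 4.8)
The plan is to prove both inequalities by the same adjointness strategy used in the proof of Theorem~\ref{th:inc_op}. In each case I would unfold the outer subsethood degree on the right-hand side as an infimum over tuples and use adjointness~\eqref{eqn:adj} to reduce the claim to a single pointwise inequality; it then suffices to establish, for the appropriate tuple argument, that $S(\mathcal{D},\mathcal{D}')$ multiplied by the rank of that tuple in the transformed table is bounded above by its rank in the other transformed table. Throughout I would use the elementary fact $a \otimes (a \rightarrow b) \leq b$ together with the observation that $S(\mathcal{D},\mathcal{D}') \leq \mathcal{D}(t) \rightarrow \mathcal{D}'(t)$ for every $t$, since $S(\mathcal{D},\mathcal{D}')$ is an infimum over all tuples and hence lies below each of its members.

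For~\eqref{eqn:sim_proj}, by adjointness it suffices to show, for each $r \in \mathrm{Tupl}(R)$, that $S(\mathcal{D},\mathcal{D}') \otimes (\project{R}{\mathcal{D}})(r) \leq (\project{R}{\mathcal{D}'})(r)$. Expanding the projection as $\bigvee_{s \in \mathrm{Tupl}(T \setminus R)} \mathcal{D}(rs)$ and using the distributivity of $\otimes$ over arbitrary suprema (valid in any complete residuated lattice, being a consequence of adjointness), the left-hand side equals $\bigvee_{s} \bigl(S(\mathcal{D},\mathcal{D}') \otimes \mathcal{D}(rs)\bigr)$. For each fixed $s$ one has $S(\mathcal{D},\mathcal{D}') \leq \mathcal{D}(rs) \rightarrow \mathcal{D}'(rs)$, so $S(\mathcal{D},\mathcal{D}') \otimes \mathcal{D}(rs) \leq \mathcal{D}'(rs)$ by monotony of $\otimes$ and the rule $a \otimes (a \rightarrow b) \leq b$. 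Taking the supremum over $s$ yields $\bigvee_{s} \mathcal{D}'(rs) = (\project{R}{\mathcal{D}'})(r)$, as required.

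For~\eqref{eqn:sim_select}, again by adjointness it suffices to check, for each $t \in \mathrm{Tupl}(T)$, that $S(\mathcal{D},\mathcal{D}') \otimes (\select{y \approx d}{\mathcal{D}})(t) \leq (\select{y \approx d}{\mathcal{D}'})(t)$. Unfolding the selection gives the left-hand side as $S(\mathcal{D},\mathcal{D}') \otimes \mathcal{D}(t) \otimes (t(y) \mathop{\approx_y} d)$. Since $S(\mathcal{D},\mathcal{D}') \otimes \mathcal{D}(t) \leq \mathcal{D}'(t)$ by the same argument as above, and $\otimes$ is monotone (using commutativity and associativity to isolate the factor $t(y) \mathop{\approx_y} d$), this is bounded by $\mathcal{D}'(t) \otimes (t(y) \mathop{\approx_y} d) = (\select{y \approx d}{\mathcal{D}'})(t)$, which finishes the argument.

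I expect the selection case to be essentially routine, resting only on monotony of $\otimes$. The one step that carries genuine content is the distributivity $a \otimes \bigvee_i b_i = \bigvee_i (a \otimes b_i)$ needed for the projection, since the projection is itself defined by a supremum over the discarded attributes; everything else reduces to the two basic facts about residua already invoked in Theorem~\ref{th:inc_op}. Consequently the main obstacle, if any, is merely to justify interchanging $\otimes$ with the (possibly infinite) supremum, which is immediate from adjointness and therefore poses no real difficulty.
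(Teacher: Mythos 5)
Your proposal is correct and follows essentially the same route as the paper's own proof: reduce via adjointness to the pointwise inequality $S(\mathcal{D},\mathcal{D}') \otimes (\project{R}{\mathcal{D}})(r) \leq (\project{R}{\mathcal{D}'})(r)$, handle the projection supremum term by term using $a \otimes (a \rightarrow b) \leq b$, and treat selection by monotony of $\otimes$. The only difference is cosmetic---you make explicit the distributivity $a \otimes \bigvee_i b_i = \bigvee_i (a \otimes b_i)$ that the paper leaves implicit in its ``it suffices to prove for all $s$'' step, and you write out the selection case that the paper dismisses as analogous.
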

\begin{proof}[sketch]
  In oder to prove \eqref{eqn:sim_proj}, we check
  $S(\mathcal{D},\mathcal{D}') \otimes (\project{R}{\mathcal{D}})(r) \leq
  (\project{R}{\mathcal{D}'})(r)$ for any $r \in \mathrm{Tupl}(R)$.
  It means showing that
  \begin{align*}
    S(\mathcal{D},\mathcal{D}') \otimes
    \textstyle\bigvee_{\!s \in \mathrm{Tupl}(T \setminus R)}\mathcal{D}(rs)
    \leq (\project{R}{\mathcal{D}'})(r).
  \end{align*}
  Thus, is suffices to prove
  $S(\mathcal{D},\mathcal{D}') \otimes \mathcal{D}(rs) \leq 
  (\project{R}{\mathcal{D}'})(r)$ for all
  $s \in \mathrm{Tupl}(T \setminus R)$.
  Using monotony of $\otimes$, we get
  $S(\mathcal{D},\mathcal{D}') \otimes \mathcal{D}(rs) \leq
  (\mathcal{D}(rs) \rightarrow \mathcal{D}'(rs)) \otimes \mathcal{D}(rs)
  \leq \mathcal{D}'(rs)$,
  because $rs \in \mathrm{Tupl}(T)$. Therefore,
  $S(\mathcal{D},\mathcal{D}') \otimes \mathcal{D}(rs)
  \leq \mathcal{D}'(rs) \leq
  \textstyle\bigvee_{\!s \in \mathrm{Tupl}(T \setminus R)}\mathcal{D}'(rs) =
  (\project{R}{\mathcal{D}'})(r)$,
  which proves the first claim of \eqref{eqn:sim_proj}.
  In case of \eqref{eqn:sim_select}, we proceed analogously.
 % In case of \eqref{eqn:sim_select}, we proceed
 %  as follows:
 %  \begin{align*}
 %    S(\mathcal{D},\mathcal{D}') \otimes
 %    (\select{y \approx d}{\mathcal{D}})(t) &=
 %    S(\mathcal{D},\mathcal{D}') \otimes
 %    \mathcal{D}(t) \otimes t(y) \mathop{\approx_y} d \\
 %    &\leq (\mathcal{D}(t) \rightarrow \mathcal{D}'(t)) \otimes
 %    \mathcal{D}(t) \otimes t(y) \mathop{\approx_y} d \\
 %    &\leq \mathcal{D}'(t) \otimes t(y) \mathop{\approx_y} d =
 %    (\select{y \approx d}{\mathcal{D'}})(t),
 %  \end{align*}
 %  showing \eqref{eqn:sim_select}.
  %The rest follows easily.
  \qed
\end{proof}

Theorem~\ref{thm:sim_cart*} and Theorem~\ref{thm:sim_proj_select}
used together yield

\begin{corollary}
  Let $\mathcal{D}_1$ and $\mathcal{D}'_1$ be RDTs on relation scheme $S$ and
  let $\mathcal{D}_2$ and $\mathcal{D}'_2$ be RDTs on relation scheme $T$
  such that $S \cap T = \emptyset$. Then,
  \begin{align}
    S(\mathcal{D}_1,\mathcal{D}'_1) \otimes S(\mathcal{D}_2,\mathcal{D}'_2) &\leq
    S(\join{p \approx q}{\mathcal{D}_1}{\mathcal{D}_2},
    \join{p \approx q}{\mathcal{D}'_1}{\mathcal{D}'_2}).
    \label{eqn:inc_join*} % \\
    % E(\mathcal{D}_1,\mathcal{D}'_1) \otimes E(\mathcal{D}_2,\mathcal{D}'_2) &\leq
    % S(\join{p \approx q}{\mathcal{D}_1}{\mathcal{D}_2},
    % \join{p \approx q}{\mathcal{D}'_1}{\mathcal{D}'_2}),
    % \label{eqn:sim_cart*}
  \end{align}
  for any $p \in S$ and $q \in T$ having the same domain with similarity.
  \qed
\end{corollary}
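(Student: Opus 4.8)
The plan is to exploit the definition of the similarity-based $\theta$-join as a similarity-based selection applied to a Cartesian product, namely $\join{p \approx q}{\mathcal{D}_1}{\mathcal{D}_2} = \select{p \approx q}{\mathcal{D}_1 \times \mathcal{D}_2}$, and then simply to chain the two preservation results already established. First I would invoke Theorem~\ref{thm:sim_cart*} to bound the left-hand side by the subsethood degree of the Cartesian products:
\begin{align*}
  S(\mathcal{D}_1,\mathcal{D}'_1) \otimes S(\mathcal{D}_2,\mathcal{D}'_2)
  \leq
  S(\mathcal{D}_1 \times \mathcal{D}_2, \mathcal{D}'_1 \times \mathcal{D}'_2).
\end{align*}
Here $\mathcal{D}_1 \times \mathcal{D}_2$ and $\mathcal{D}'_1 \times \mathcal{D}'_2$ are RDTs on the relation scheme $S \cup T$, on which both $p$ and $q$ are attributes sharing a common domain with similarity, so the selection $\select{p \approx q}{-}$ is meaningful.

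Next I would apply the selection part of Theorem~\ref{thm:sim_proj_select}, i.e.\ \eqref{eqn:sim_select}, to the two product tables on $S \cup T$, obtaining
\begin{align*}
  S(\mathcal{D}_1 \times \mathcal{D}_2, \mathcal{D}'_1 \times \mathcal{D}'_2)
  \leq
  S(\select{p \approx q}{\mathcal{D}_1 \times \mathcal{D}_2},
    \select{p \approx q}{\mathcal{D}'_1 \times \mathcal{D}'_2}).
\end{align*}
Composing these two inequalities by transitivity of $\leq$ and rewriting each selection of a product as the corresponding join, by the defining identity $\select{p \approx q}{\mathcal{D}_1 \times \mathcal{D}_2} = \join{p \approx q}{\mathcal{D}_1}{\mathcal{D}_2}$ (and likewise for the primed tables), then yields exactly \eqref{eqn:inc_join*}.

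The one point that requires care --- and which I expect to be the only real obstacle --- is that Theorem~\ref{thm:sim_proj_select} is stated for an attribute-to-constant selection $\select{y \approx d}{\mathcal{D}}$, whereas the selection occurring in the join compares two attributes, $\select{p \approx q}{-}$. I would therefore note that its proof transfers verbatim: writing $(\select{p \approx q}{\mathcal{E}})(t) = \mathcal{E}(t) \otimes (t(p) \approx t(q))$, the factor $t(p) \approx t(q)$ depends only on the tuple $t$ and not on the underlying RDT, so it plays exactly the role of the constant factor $t(y) \approx_y d$ in \eqref{eqn:sim_select}. Concretely, by monotony of $\otimes$ together with $S(\mathcal{E},\mathcal{E}') \otimes \mathcal{E}(t) \leq \mathcal{E}'(t)$,
\begin{align*}
  S(\mathcal{E},\mathcal{E}') \otimes (\select{p \approx q}{\mathcal{E}})(t)
  = S(\mathcal{E},\mathcal{E}') \otimes \mathcal{E}(t) \otimes (t(p) \approx t(q))
  \leq \mathcal{E}'(t) \otimes (t(p) \approx t(q))
  = (\select{p \approx q}{\mathcal{E}'})(t),
\end{align*}
and adjointness \eqref{eqn:adj} delivers the required subsethood bound for the attribute-to-attribute selection. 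With this remark in place the corollary follows immediately, so no genuinely new computation is needed beyond recognizing the join as a composition of operations already shown to preserve subsethood.
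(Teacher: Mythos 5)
Your proof is correct and takes essentially the same route as the paper, which obtains the corollary precisely by composing Theorem~\ref{thm:sim_cart*} with the selection inequality \eqref{eqn:sim_select} of Theorem~\ref{thm:sim_proj_select} via the defining identity $\join{p \approx q}{\mathcal{D}_1}{\mathcal{D}_2} = \select{p \approx q}{\mathcal{D}_1 \times \mathcal{D}_2}$. Your explicit check that the attribute-to-attribute selection $\select{p \approx q}{\mathcal{E}}$ admits the same adjointness argument as the attribute-to-constant case is a detail the paper leaves implicit, and it is handled correctly.
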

% \begin{proof}
%   Immediate consequences of
%   Theorem~\ref{thm:sim_cart*} and Theorem~\ref{thm:sim_proj_select}.
%   \qed
% \end{proof}

As a result, we have shown that important relational operations
in our model (including similarity-based joins)
preserve similarity defined by~\eqref{def:E}. Thus, we have provided
a formal justification for the (intuitively expected but nontrivial)
fact that similar input data yield similar results of queries.
%V: sem napsat vyjadreni o ostatnich relacnich operacich

\begin{table}[t]
  \centering
  \caption{Alternative ranks for houses for sale from Table~\ref{tab:sales_a}}
  \label{tab:sales_b}
  \setlength{\tabcolsep}{4pt}
  \renewcommand{\arraystretch}{0.9}
  %% example/tab/sales_b.tex
  \begin{tabular}{|r||lcrclc|}
    \hline &
    \atr{AGENT} & \atr{ID} & \atr{SQFT} &
    \atr{AGE} & \atr{LOCATION} & \atr{PRICE} \\
    \hline
    \rule{0pt}{8pt}%
    $0.93$ & \val{Brown} & \val{138} &
    \val{1185} & \val{48} & \val{Vestal} & \val{\$228,500} \\
    $0.91$ & \val{Clark} & \val{140} &
    \val{1120} & \val{30} & \val{Endicott} & \val{\$235,800} \\
    $0.87$ & \val{Brown} & \val{156} &
    \val{1300} & \val{85} & \val{Binghamton} & \val{\$248,600} \\
    $0.85$ & \val{Brown} & \val{142} &
    \val{950} & \val{50} & \val{Binghamton} & \val{\$189,000} \\
    $0.82$ & \val{Davis} & \val{189} &
    \val{1250} & \val{25} & \val{Binghamton} & \val{\$287,300} \\
    $0.79$ & \val{Clark} & \val{158} &
    \val{1200} & \val{25} & \val{Vestal} & \val{\$293,500} \\
    $0.75$ & \val{Davis} & \val{166} &
    \val{1040} & \val{50} & \val{Vestal} & \val{\$286,200} \\
    $0.37$ & \val{Davis} & \val{112} &
    \val{1890} & \val{30} & \val{Endicott} & \val{\$345,000} \\
    \hline
  \end{tabular}
\end{table}

%V: nova poznamka o dalsich operacich
\begin{remark}
  In this paper, we have restricted ourselves only to a fragment of relational
  operations in our model. In~\cite{BeVy:Qssbd}, we have shown that in order to
  have a relational algebra whose expressive power is the same as the expressive
  power of the domain relational calculus, we have to consider additional
  operations of \emph{residuum} (defined componentwise using $\rightarrow$)
  and \emph{division.} Nevertheless, these two additional operations preserve $E$
  as well---it can be shown using similar arguments as in the proof of
  Theorem~\ref{th:inc_op}. As a consequence, the similarity is preserved by all
  queries that can be formulated in DRC~\cite{BeVy:Qssbd}.
\end{remark}

%%%%%%%%%%%%%%%%%%%%%%%%%%%%%%%%%%%%%%%%%%%%%%%%%%%%%%%%%%%%%%%%%%%%%%%%%%%%%%%%
\subsection{Illustrative Example}
Consider again the RDT from Table~\ref{tab:sales_a}. The RDT can be seen as
a result of querying a database of houses for sale where one wants to find
a house which is sold for (approximately) \val{\$200,000} and has
(approximately) \val{1200} square feet. The attributes in the RDT are:
real estate agent name (\atr{agent}), house ID (\atr{id}), square footage
(\atr{sqft}), house age (\atr{age}), house location (\atr{location}), and
house price (\atr{price}). In this example, the complete residuated
lattice $\mathbf{L}=\langle L,\wedge,\vee,\otimes,\rightarrow,0,1\rangle$
serving as the structure of ranks will be the so-called
\L ukasiewicz algebra~\cite{Bel:FRS,Got:Mfl,Haj:MFL}.
That is, $L=[0,1]$, $\wedge$ and $\vee$ are minimum and maximum, respectively,
and the multiplication and residuum are defined as follows:
$a \otimes b = \max(a + b - 1, 0)$ and $a \rightarrow b = \min(1 - a + b, 1)$
for all $a,b \in L$.

Intuitively, it is natural to consider similarity of
values in domains of \atr{sqft}, \atr{age}, \atr{location}, and \atr{price}.
For instance, similarity of prices can be defined by
$p_1 \mathop{\approx_{\atr{PRICE}}} p_2 = s(|p_2 - p_1|)$ using an antitone
scaling function $s\!: [0,\infty) \to [0,1]$ with $s(0) = 1$
(i.e., identical prices are fully similar). Analogously, a similarity
of locations can be defined based on their geographical distance and/or
based on their evaluation (safety, school districts,\,\ldots) by an expert.
In contrast, there is no need to have similarities for \atr{id} and \atr{agents}
because end-users do not look for houses based on (similarity of) their
(internal) IDs which are kept as keys merely because of performance reasons.
Obviously, there may be various reasonable similarity relations defined for
the above-mentioned domains and their careful choice is an important task.
In this paper, we neither explain nor recommend particular ways to do so because
(i) we try to keep a general view of the problem and (ii) similarities on
domains are purpose and user dependent.

Consider now the RDT in Table~\ref{tab:sales_b} defined over the same relation
scheme as the RDT in Table~\ref{tab:sales_a}. These two RDTs can be seen as two
(slightly different) answers to the same query (when e.g., the domain
similarities have been slightly changed) or answers to a modified
query (e.g., ``show all houses which are sold for (approximately)
\val{\$210,000} and\,\ldots''). The similarity of both the RDTs
given by \eqref{def:E} is $0.98$ (very high). The results in the previous
%V: sem jsem pridal ze arbitrarily complex v ramci operaci, co pouzivame
section say that if we perform any (arbitrarily complex) query (using the
relational operations we consider in this paper) with Table~\ref{tab:sales_b}
instead of Table~\ref{tab:sales_a}, the results will be similar at least
to degree $0.98$.

\begin{table}
  \centering
  \caption{Join of Table~\ref{tab:sales_a} and the table of customers}
  \label{tab:join}
  \setlength{\tabcolsep}{4pt}
  \renewcommand{\arraystretch}{0.9}
  %% example/tab/join.tex
  \begin{tabular}{|r||ccccc|}
    \hline &
    \atr{AGENT} & \atr{ID} & \atr{PRICE} & \atr{NAME} & \atr{BUDGET} \\
    \hline
    \rule{0pt}{8pt}%
    $0.91$ & \val{Brown} & \val{138} & \val{\$228,500} &
    \val{Grant} & \val{\$240,000} \\
    $0.89$ & \val{Brown} & \val{138} & \val{\$228,500} &
    \val{Evans} & \val{\$250,000} \\
    $0.89$ & \val{Brown} & \val{138} & \val{\$228,500} &
    \val{Finch} & \val{\$210,000} \\
    $0.88$ & \val{Clark} & \val{140} & \val{\$235,800} &
    \val{Grant} & \val{\$240,000} \\
    $0.86$ & \val{Clark} & \val{140} & \val{\$235,800} &
    \val{Evans} & \val{\$250,000} \\
    $0.84$ & \val{Brown} & \val{156} & \val{\$248,600} &
    \val{Evans} & \val{\$250,000} \\[-4pt]
%    $0.84$ & \val{Clark} & \val{140} & \val{\$235,800} &
%    \val{Finch} & \val{\$210,000} \\
%    $0.83$ & \val{Brown} & \val{156} & \val{\$248,600} &
%    \val{Grant} & \val{\$240,000} \\[-4pt]
    \multicolumn{1}{c}{$\vdots$} &
    \multicolumn{1}{c}{$\vdots$} &
    \multicolumn{1}{c}{$\vdots$} &
    \multicolumn{1}{c}{$\vdots$} &
    \multicolumn{1}{c}{$\vdots$} &
    \multicolumn{1}{c}{$\vdots$} 
    \\
    $0.16$ & \val{Davis} & \val{112} & \val{\$345,000} &
    \val{Grant} & \val{\$240,000} \\
    $0.10$ & \val{Davis} & \val{112} & \val{\$345,000} &
    \val{Finch} & \val{\$210,000} \\
    \hline
  \end{tabular}
\end{table}

For illustration, consider an additional RDT of customers over relation
scheme containing two attributes: \atr{name} (customer name) and \atr{budget}
(price the customer is willing to pay for a house). In particular, let
$\langle \val{Evans}, \val{\$250,000}\rangle$,
$\langle \val{Finch}, \val{\$210,000}\rangle$, and
$\langle \val{Grant}, \val{\$240,000}\rangle$ be the only tuples
in the RDT (all with ranks $1$). The answer to the following query
\begin{align*}
  \project{\{\atr{AGENT},\atr{ID},\atr{PRICE},\atr{NAME},\atr{BUDGET}\}}
  {\join{\atr{price} \approx \atr{budget\kern-1pt}}
    {\mathcal{D}_1}{\mathcal{D}_c}},
\end{align*}
where $\mathcal{D}_1$ stands for Table~\ref{tab:sales_a}
and $\mathcal{D}_c$ stands for the RDT of customers is in Table~\ref{tab:join}
(for brevity, some records are omitted). The RDT thus represents an answer to
query ``show deals for houses sold for (approximately) \val{\$200,000} with
(approximately) \val{1200} square feet and customers so that their budget is
similar to the house price''. Furthermore, we can obtain an RDT of best
agent-customer matching is we project the join onto \atr{AGENT} and \atr{NAME}:
\begin{align*}
  \project{\{\atr{AGENT},\atr{NAME}\}}
  {\join{\atr{price} \approx \atr{budget\kern-1pt}}
    {\mathcal{D}_1}{\mathcal{D}_c}}.
\end{align*}
The result of matching is in Table~\ref{tab:join_ab}\,(left). 
Due to our results, if we perform the same query with Table~\ref{tab:sales_b}
instead of Table~\ref{tab:sales_a}, the new result is guaranteed to be similar
with the obtained result at least to degree $0.98$. The result for
Table~\ref{tab:sales_b} is shown in Table~\ref{tab:join_ab}\,(right). 

\begin{table}[t]
  \centering
  \caption{Results of agent-customer matching for
    Table~\ref{tab:sales_a} and Table~\ref{tab:sales_b}}
  \label{tab:join_ab}
  \setlength{\tabcolsep}{4pt}
  \renewcommand{\arraystretch}{0.9}
  %% example/tab/join_a.tex
  \begin{tabular}{|r||*{2}l|}
    \hline &
    \atr{AGENT} &
    \atr{NAME} \\
    \hline
    \rule{0pt}{8pt}%
    $0.91$ & \val{Brown} & \val{Grant} \\
    $0.89$ & \val{Brown} & \val{Evans} \\
    $0.89$ & \val{Brown} & \val{Finch} \\
    $0.88$ & \val{Clark} & \val{Grant} \\
    $0.86$ & \val{Clark} & \val{Evans} \\
    $0.84$ & \val{Clark} & \val{Finch} \\
    $0.74$ & \val{Davis} & \val{Evans} \\
    $0.72$ & \val{Davis} & \val{Grant} \\
    $0.66$ & \val{Davis} & \val{Finch} \\
    \hline
  \end{tabular}
  \qquad\qquad
  %% example/tab/join_b.tex
  \begin{tabular}{|r||*{2}l|}
    \hline &
    \atr{AGENT} &
    \atr{NAME} \\
    \hline
    \rule{0pt}{8pt}%
    $0.91$ & \val{Brown} & \val{Grant} \\
    $0.90$ & \val{Clark} & \val{Grant} \\
    $0.89$ & \val{Brown} & \val{Evans} \\
    $0.89$ & \val{Brown} & \val{Finch} \\
    $0.88$ & \val{Clark} & \val{Evans} \\
    $0.86$ & \val{Clark} & \val{Finch} \\
    $0.75$ & \val{Davis} & \val{Evans} \\
    $0.73$ & \val{Davis} & \val{Grant} \\
    $0.67$ & \val{Davis} & \val{Finch} \\
    \hline
  \end{tabular}
\end{table}

%%%%%%%%%%%%%%%%%%%%%%%%%%%%%%%%%%%%%%%%%%%%%%%%%%%%%%%%%%%%%%%%%%%%%%%%%%%%%%%%
\subsection{Tuple-Based Similarity and Further Topics}\label{sec:tbsim}
While the rank-based similarity from Section~\ref{sec:rnk} can be
sufficient in many cases, there are situations where one wants to consider
a similarity of RDTs based on ranks and (pairwise) similarity of tuples.
For instance, if we take
the RDT from Table~\ref{tab:sales_a} and make a new one by taking all tuples
(keeping their ranks) and increasing the prices by one dollar, we will come up
with an RDT which is, according to rank-based similarity, very different from
the original one. Intuitively, one would expect to have a high degree of
similarity of the RDTs because they differ only by a slight change in price.
This issue can be solved by considering the following tuple-based
degree of inclusion:
\begin{align}
  S^\approx(\mathcal{D}_1,\mathcal{D}_2) &=
  \textstyle\bigwedge_{t \in \mathrm{Tupl}(T)}
  \bigl(\mathcal{D}_1(t) \rightarrow
  \textstyle\bigvee_{\!t' \in \mathrm{Tupl}(T)}
  \bigl(\mathcal{D}_2(t') \otimes t \approx t'\bigr)\bigr),
  \label{def:SH}
\end{align}
where $t \approx t' = \textstyle\bigwedge_{y \in T}t(y) \approx_y t'(y)$
is a similarity of tuples $t$ and $t'$ over $T$, cf.~\cite{BeVy:Crmdfl}.
In a similar way as in~\eqref{eqn:ES}, we may define $E^\approx$
using $S^\approx$ instead of $S$.

\begin{remark}
  By an easy inspection, $S(\mathcal{D}_1,\mathcal{D}_2) \leq
  S^\approx(\mathcal{D}_1,\mathcal{D}_2)$, i.e. \eqref{def:SH} yields
  an estimate which is at least as high as \eqref{def:S} and analogously
  for $E$ and $E^\approx$. Note that \eqref{def:SH} has a natural meaning.
  Indeed, $S^\approx(\mathcal{D}_1,\mathcal{D}_2)$ can be understood as
  a degree to which the following statement is true: ``If $t$ belongs to
  $\mathcal{D}_1$, then there is $t'$ which is similar to $t$ and which
  belongs to $\mathcal{D}_2$''. Hence, $E^\approx(\mathcal{D}_1,\mathcal{D}_2)$
  is a degree to which for each tuple from $\mathcal{D}_1$ there is a similar
  tuple in $\mathcal{D}_2$ and \emph{vice versa.} If $\mathbf{L}$ is a two-element
  Boolean algebra and each $\approx_y$ is an identity, then
  $E^\approx(\mathcal{D}_1,\mathcal{D}_2) = 1$ if{}f
  $\mathcal{D}_1$ and $\mathcal{D}_2$ are identical (in the usual sense).
\end{remark}

For tuple-based inclusion (similarity) and for certain relational operations,
we can prove analogous preservation formulas as in Section~\ref{sec:rnk}.
For instance,
\begin{align}
  S^\approx(\mathcal{D}_1,\mathcal{D}'_1) \wedge S(\mathcal{D}_2,\mathcal{D}'_2)
  &\leq
  S^\approx(\mathcal{D}_1 \cup \mathcal{D}_2,\mathcal{D}'_1 \cup \mathcal{D}'_2),
  \label{eqn:t_inc_union}
  \\
  S^\approx(\mathcal{D}_1,\mathcal{D}'_1) \otimes S(\mathcal{D}_2,\mathcal{D}'_2)
  &\leq
  S^\approx(\mathcal{D}_1 \times \mathcal{D}_2,
  \mathcal{D}'_1 \times \mathcal{D}'_2),
  \label{eqn:t_inc_cart*}
  \\
  S^\approx(\mathcal{D},\mathcal{D}') &\leq
  S^\approx(\project{R}{\mathcal{D}}, \project{R}{\mathcal{D}'}).
\end{align}
On the other hand, similarity-based selection $\sigma_{y \approx d}$
(and, as a consequence, similarity-based join $\bowtie_{p \approx q}$)
does not preserve $S^\approx$ in general which can be seen as a technical
complication. This issue can be overcome by introducing a new type
of selection $\sigma^\approx_{y \approx d}$ which is
\emph{compatible} with $S^\approx$.
Namely, we can define
\begin{align}
  \bigl(\selectH{y \approx d}{\mathcal{D}}\bigr)(t) &=
  \textstyle\bigvee_{\!
    t' \in \mathrm{Tupl}(T)}
  \bigl(\mathcal{D}(t') \otimes t' \approx t \otimes t(y) \mathop{\approx_y}
  d\bigr).\label{eqn:selectH}
\end{align}
For this notion, we can prove that
$S^\approx(\mathcal{D},\mathcal{D}') \leq
S^\approx(\selectH{y \approx d}{\mathcal{D}},
\selectH{y \approx d}{\mathcal{D}'})$. Similar extension can be done
for any relational operation which does not preserve $S^\approx$ directly.
Detailed description of the extension is postponed to a full version
of the paper because of the limited scope.

%%%%%%%%%%%%%%%%%%%%%%%%%%%%%%%%%%%%%%%%%%%%%%%%%%%%%%%%%%%%%%%%%%%%%%%%%%%%%%%%
\subsection{Unifying Approach to Similarity of RDTs}\label{sec:common}
In this section, we outline a general approach to similarity
of RDTs that includes both the approaches from the previous sections.
Interestingly, both \eqref{def:S} and \eqref{def:SH} have a common
generalization using truth-stressing hedges~\cite{Haj:MFL,Haj:Ovt}.
Truth-stressing hedges represent unary operations on complete residuated lattices
(denoted by $*$ ) that serve as interpretations of logical connectives like
``very true'', see~\cite{Haj:MFL}.
Two boundary cases of hedges are (i) identity, i.e.
$a^{\ast} = a$ ($a \in L$); (ii) globalization: $1^{\ast} = 1$,
and $a^{\ast} = 0$ if $a < 1$. The globalization~\cite{TaTi:Gist} is
a hedge which can be interpreted as ``fully true''.

Let ${}^*$ be truth-stressing hedge on $\mathbf{L}$. For RDTs
$\mathcal{D}_1,\mathcal{D}_2$ on $T$, we define the degree
$S^\approx_{\hh}(\mathcal{D}_1,\mathcal{D}_2)$ of inclusion of $\mathcal{D}_1$
in $\mathcal{D}_2$ (with respect to ${}^*$) by
\begin{align}
  S^\approx_\hh(\mathcal{D}_i,\mathcal{D}_j) &=
  \textstyle\bigwedge_{t \in \mathrm{Tupl}(T)}
  \bigl(\mathcal{D}_i(t) \rightarrow
  \textstyle\bigvee_{\!t' \in \mathrm{Tupl}(T)}
  \bigl(\mathcal{D}_j(t') \otimes (t \approx t')^*\bigr)\bigr).
  \label{eqn:S3}
\end{align}
Now, it is easily seen that for $\hh$ being the identity, \eqref{eqn:S3}
coincides with \eqref{def:SH}; if $\approx$ is separating
(i.e., $t_1 \approx t_2 = 1$ if{}f $t_1$ is identical to $t_2$)
and $\hh$ is the globalization,
\eqref{eqn:S3} coincides with \eqref{def:S}. Thus, both
\eqref{def:S} and \eqref{def:SH} are particular instances of \eqref{eqn:S3}
resulting by a choice of the hedge. Note that identity and globalization are
two borderline cases of hedges. In general, complete residuated lattices
admit other nontrivial hedges that can be used in \eqref{eqn:S3}.
Therefore, the hedge in~\eqref{eqn:S3} serves as a parameter that has an
influence on how much emphasis we put on the fact that two tuples are similar.
In case of globalization, we put full emphasis, i.e., the tuples are required
to be equal to degree $1$ (exactly the same if $\approx$ is separating). 

If we consider properties needed to prove analogous estimation formulas
for general $S^\approx_{\hh}$ as we did in case of $S$ and $S^\approx$, we
come up with the following important property:
\begin{align}
  (r\approx s)^\hh \otimes (s\approx t)^\hh &\leq (r\approx t)^\hh,
  \label{eqn:tr}
\end{align}
for every $r,s,t \in \mathrm{Tupl}(T)$ which can be seen as transitivity of
$\approx$ with respect to $\otimes$ and ${}^*$. Consider the following two
cases in which \eqref{eqn:tr} is satisfied:
\bgroup
\addtolength{\leftmargini}{2.1em}
\begin{itemize}\parskip=4pt
\item[Case 1:]
  $\hh$ is globalization and $\approx$ is separating.
  If the left hand side of \eqref{eqn:tr} is nonzero, then 
  $r \approx s = 1$ and $s \approx t = 1$. Separability implies $r=s=t$,
  i.e. $(r \approx t)^\hh = 1^\hh = 1$, verifying \eqref{eqn:tr}.
\item[Case 2:]
  $\approx$ is transitive. In this case, since
  $a^\hh\otimes b^\hh \leq (a\otimes b)^\hh$ (follows from properties of hedges
  by standard arguments), transitivity of $\approx$ and monotony of ${}^*$ yield
  $(r\approx s)^\hh \otimes (s\approx t)^\hh \leq
  ((r\approx s)\otimes (s\approx t))^\hh \leq (r\approx t)^\hh$.
\end{itemize}
\egroup
The following lemma shows that $S^\approx_\hh$ and consequently $E^\approx_\hh$
have properties that are considered natural for (degrees of) inclusion and 
similarity:

\begin{lemma}
  If $\approx$ satisfies \eqref{eqn:tr} with respect to ${}^*$ then
  \vspace*{-4pt}
  \bgroup
  \addtolength{\leftmargini}{4pt}
  \begin{itemize}\parskip=2pt
  \item[\rm (\emph{i})]
    $S^\approx_\hh$ is a reflexive and transitive $\mathbf{L}$-relation,
    i.e. an $\mathbf{L}$-quasiorder.
  \item[\rm (\emph{ii})]
    $E^\approx_\hh$ defined by $E^\approx_\hh(\mathcal{D}_1,\mathcal{D}_2) =
    S^\approx_\hh(\mathcal{D}_1,\mathcal{D}_2) \wedge
    S^\approx_\hh(\mathcal{D}_2,\mathcal{D}_1)$ is
    a reflexive, symmetric, and transitive $\mathbf{L}$-relation,
    i.e. an $\mathbf{L}$-equivalence.
  \end{itemize}
  \egroup
\end{lemma}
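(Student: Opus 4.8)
The plan is to prove both parts by verifying the $\mathbf{L}$-relational properties directly from the definition \eqref{eqn:S3}, using the transitivity assumption \eqref{eqn:tr} only where it is genuinely needed. For part (i), I would first establish reflexivity, i.e. $S^\approx_\hh(\mathcal{D}_1,\mathcal{D}_1) = 1$. By \eqref{eqn:S3} this amounts to showing $\mathcal{D}_1(t) \rightarrow \bigvee_{t'}(\mathcal{D}_1(t') \otimes (t \approx t')^\hh) = 1$ for every $t$, which by the adjointness property \eqref{eqn:adj} (specifically $a \rightarrow b = 1$ iff $a \leq b$, noted in the opening Remark) reduces to $\mathcal{D}_1(t) \leq \bigvee_{t'}(\mathcal{D}_1(t') \otimes (t \approx t')^\hh)$. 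The crucial observation is that the term on the right for the specific choice $t' = t$ is $\mathcal{D}_1(t) \otimes (t \approx t)^\hh = \mathcal{D}_1(t) \otimes 1^\hh = \mathcal{D}_1(t)$, using reflexivity of $\approx$ and $1^\hh = 1$ (a standard hedge property); since the supremum dominates this single term, reflexivity follows without needing \eqref{eqn:tr}.

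The substantive step is transitivity of $S^\approx_\hh$, i.e. showing $S^\approx_\hh(\mathcal{D}_1,\mathcal{D}_2) \otimes S^\approx_\hh(\mathcal{D}_2,\mathcal{D}_3) \leq S^\approx_\hh(\mathcal{D}_1,\mathcal{D}_3)$. By adjointness it suffices to fix $t$ and bound $S^\approx_\hh(\mathcal{D}_1,\mathcal{D}_2) \otimes S^\approx_\hh(\mathcal{D}_2,\mathcal{D}_3) \otimes \mathcal{D}_1(t)$ from above by $\bigvee_{t''}(\mathcal{D}_3(t'') \otimes (t \approx t'')^\hh)$. I would first apply the defining inequality $\mathcal{D}_1(t) \otimes S^\approx_\hh(\mathcal{D}_1,\mathcal{D}_2) \leq \bigvee_{t'}(\mathcal{D}_2(t') \otimes (t \approx t')^\hh)$, which comes from $a \otimes (a \rightarrow b) \leq b$ applied inside the infimum. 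Distributing $\otimes$ over the supremum then leaves, for each $t'$, a term $\mathcal{D}_2(t') \otimes (t \approx t')^\hh \otimes S^\approx_\hh(\mathcal{D}_2,\mathcal{D}_3)$; applying the defining inequality of $S^\approx_\hh(\mathcal{D}_2,\mathcal{D}_3)$ to $\mathcal{D}_2(t')$ yields a further supremum over $t''$ of terms $\mathcal{D}_3(t'') \otimes (t' \approx t'')^\hh \otimes (t \approx t')^\hh$. At exactly this point I would invoke \eqref{eqn:tr}, namely $(t \approx t')^\hh \otimes (t' \approx t'')^\hh \leq (t \approx t'')^\hh$, together with monotony of $\otimes$, to collapse the two hedged similarity factors into $(t \approx t'')^\hh$, giving a bound by $\bigvee_{t''}(\mathcal{D}_3(t'') \otimes (t \approx t'')^\hh)$ as required. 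This transitivity argument is the main obstacle: the bookkeeping of the nested suprema and the careful placement of \eqref{eqn:tr} is where the hypothesis is essential, and commutativity and associativity of $\otimes$ must be used freely to rearrange factors.

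For part (ii), the properties of $E^\approx_\hh$ follow formally from those of $S^\approx_\hh$ established in part (i). Reflexivity $E^\approx_\hh(\mathcal{D}_1,\mathcal{D}_1) = S^\approx_\hh(\mathcal{D}_1,\mathcal{D}_1) \wedge S^\approx_\hh(\mathcal{D}_1,\mathcal{D}_1) = 1$ is immediate, and symmetry is built into the definition $E^\approx_\hh(\mathcal{D}_1,\mathcal{D}_2) = S^\approx_\hh(\mathcal{D}_1,\mathcal{D}_2) \wedge S^\approx_\hh(\mathcal{D}_2,\mathcal{D}_1)$ since $\wedge$ is commutative. For transitivity of $E^\approx_\hh$, I would use that $E^\approx_\hh(\mathcal{D}_1,\mathcal{D}_2) \otimes E^\approx_\hh(\mathcal{D}_2,\mathcal{D}_3) \leq S^\approx_\hh(\mathcal{D}_1,\mathcal{D}_2) \otimes S^\approx_\hh(\mathcal{D}_2,\mathcal{D}_3) \leq S^\approx_\hh(\mathcal{D}_1,\mathcal{D}_3)$ by part (i) and the fact that $a \wedge b \leq a$; a symmetric estimate starting from the second components gives the bound by $S^\approx_\hh(\mathcal{D}_3,\mathcal{D}_1)$, and taking the infimum of the two bounds yields $E^\approx_\hh(\mathcal{D}_1,\mathcal{D}_3)$. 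This part requires no new ideas beyond part (i) and the elementary interaction of $\wedge$ with $\otimes$.
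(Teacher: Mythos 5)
Your proof is correct, and it takes a genuinely different route from the paper's, which disposes of the lemma in one line: the authors observe that under \eqref{eqn:tr} the hedged tuple similarity $(t \approx t')^\hh$ is reflexive, symmetric, and transitive with respect to $\otimes$, and then cite the general results on subsethood and similarity degrees of fuzzy relations from Section~4.2 of \cite{Bel:FRS}, of which $S^\approx_\hh$ and $E^\approx_\hh$ are instances. What you have done is unfold exactly the computations that this citation packages: your reflexivity step (instantiate $t'=t$, use $t \approx t = 1$ and the hedge law $1^\hh = 1$), your transitivity step (two applications of $a \otimes (a \rightarrow b) \leq b$, distributivity of $\otimes$ over suprema --- itself a consequence of adjointness \eqref{eqn:adj}, hence available in any complete residuated lattice --- followed by \eqref{eqn:tr} to collapse $(t \approx t')^\hh \otimes (t' \approx t'')^\hh$ into $(t \approx t'')^\hh$), and your derivation of (ii) from (i) via $a \wedge b \leq a$ and part~(i) applied in both directions are precisely the standard arguments behind the cited result. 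The trade-off: the paper's appeal to the general theory is shorter and situates the lemma within the framework of fuzzy relational systems that underlies the whole paper, whereas your direct verification is self-contained and makes visible where each hypothesis enters --- in particular that \eqref{eqn:tr} is needed only for transitivity, that reflexivity needs only reflexivity of each $\approx_y$ together with $1^\hh = 1$, and (a small bonus the citation route obscures) that symmetry of $\approx$, which the general theory formally assumes of an $\mathbf{L}$-similarity, is never actually used in establishing (i) or (ii).
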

\begin{proof}
  The assertion follows from results in \cite[Section 4.2]{Bel:FRS} by taking
  into account that $\approx^\hh$ is reflexive, symmetric, and transitive
  with respect to $\otimes$.
  \qed
   % (a) Tedious but routine, using standard properties of residuated lattices and hedges.
   %   (Alternatively, it follows from the results in \cite[Section 4.2]{Bel:FRS} by taking into account
   %   that $\approx^\hh$ is reflexive, symmetric, and transitive w.r.t. $\otimes$.)
   % (b) Reflexivity and symmetry are obvious; transitivity follows from the
   % transitivity of $S_\hh$.
\end{proof}

% Note that analogously as in case of $S^\approx$, the definition of
% $S^\approx_\hh$ may be reformulated using the following variant of the
% concept of a similarity-based closure:
% \begin{align}
%   (\HH{\mathcal{D}})(t) &=
%   \textstyle\bigvee_{\!t' \in \mathrm{Tupl}(T)}
%   \bigl(\mathcal{D}(t') \otimes (t' \approx t)^*\bigr).
% \end{align}

%%%%%%%%%%%%%%%%%%%%%%%%%%%%%%%%%%%%%%%%%%%%%%%%%%%%%%%%%%%%%%%%%%%%%%%%%%%%%%%%
\section{Conclusion and Future Research}
%V: sem jsem napsal, ze jsme to ukazali pouze pro fragment relacnich operaci, ...
We have shown that an important fragment of relational operation in
similarity-based databases preserves various types of similarity.
As a result, similarity of query results based on these relational operations
can be estimated based on similarity of input data tables before the queries
are executed. Furthermore, the results of this paper have shown 
%R: dal bych an important desirable property
a desirable important property of the underlying similarity-based model of data:
slight changes in input data do not produce huge changes in query results.
Future research will focus on the role of particular relational operations
called similarity-based closures that play an important role in tuple-based
similarities of RDTs. An outline of results in this direction is presented
in \cite{BeUrVy:Scoqrisbd}.
% see \cite{BeUrVy:Scoqrisbd} (??)

%%%%%%%%%%%%%%%%%%%%%%%%%%%%%%%%%%%%%%%%%%%%%%%%%%%%%%%%%%%%%%%%%%%%%%%%%%%%%%%%
%%%%%   BIBLIOGRAPHY
%%%%%%%%%%%%%%%%%%%%%%%%%%%%%%%%%%%%%%%%%%%%%%%%%%%%%%%%%%%%%%%%%%%%%%%%%%%%%%%%

\end{document}